\pdfoutput=1

\documentclass{article}

\usepackage{hyperref}
\usepackage[a4paper]{geometry}
\usepackage[utf8]{inputenc}
\usepackage[T1]{fontenc}
\usepackage{lmodern}
\usepackage{microtype}
\usepackage{graphicx}
\usepackage{amsmath,amssymb,amsthm}
\usepackage[dvipsnames]{xcolor}

\definecolor{cb-yellow}{RGB}{221,170,51}
\definecolor{cb-red} {RGB}{187,85,102}
\definecolor{cb-teal}{RGB}{0,153,136}
\definecolor{cb-blue} {RGB}{0,68,136}
\definecolor{cb-green}{RGB}{17,119,51}
\definecolor{cb-purple} {RGB}{170,68,153}
\definecolor{cb-palegrey} {RGB}{221,221,221}

\hypersetup{
  pdfauthor={Michael Borinsky, Shiyue Ren, Maximilian Wiesmann},
  pdftitle={An edge-bicolored graph approach to the Ising model on random regular graphs},
  pdfsubject={},
  pdfkeywords={},
    colorlinks=true,
    linkcolor=cb-red,
    filecolor=magenta,      
    urlcolor=cb-purple,
    citecolor=cb-yellow
}

\newtheorem{theorem}{Theorem}
\numberwithin{theorem}{section}
\newtheorem{corollary}[theorem]{Corollary}
\newtheorem{proposition}[theorem]{Proposition}
\newtheorem{lemma}[theorem]{Lemma}

\newtheorem{notation}[theorem]{Notation}

\theoremstyle{remark}
\newtheorem{remark}[theorem]{Remark}
\newtheorem{examplex}[theorem]{Example}
\newenvironment{example}
{\pushQED{\qed}\examplex}
{\popQED\endexamplex}

\newcommand{\RR}{\mathbb{R}}
\newcommand{\ZZ}{\mathbb{Z}}

\newcommand{\degv}{\mathrm{deg}} 
\DeclareMathOperator{\Aut}{Aut}

\newcommand{\GG}{\mathcal{G}}

\renewcommand{\O}{\mathcal{O}}

\usepackage[backend=biber,style=alphabetic,maxnames=99,url=false,isbn=false]{biblatex}
\DefineBibliographyStrings{english}{
  in = {}
}
\DeclareSourcemap{
  \maps[datatype=bibtex]{
    \map{
      \step[fieldsource=doi,
            match=\regexp{https?://(dx.)?doi.org/},
            replace={}]
    }
  }
}
\AtBeginBibliography{\small}

\title{\bf An edge-bicolored graph approach to the\\Ising model on random regular graphs}
\date{}
\author{Michael Borinsky \and Shiyue Ren \and Maximilian Wiesmann}

\begin{document}

\maketitle

\begin{abstract}
We give an exact solution of the ferromagnetic Ising model on a random regular graph ensemble via analytic combinatorics. Expressing the partition function as the generating function of labeled edge-bicolored graphs, we obtain the free energy in the thermodynamic limit from the asymptotic enumeration of these graphs. A simple analysis of the resulting formula reveals a second-order phase transition with critical exponents of the mean-field universality class.
\end{abstract}

\section{Introduction}
\label{sec:intro}

The Ising model \cite{lenz1920Ising} is a prominent model for ferromagnetism. Despite its simplicity, exact solutions (i.e.\ closed form expressions for the free energy function in the thermodynamic limit) have been obtained only for very few graph  families. Notable among these are Onsager's famous solution of the $\ZZ^2$ lattice Ising model \cite{OnsagerSolution}, and the solution of the ``infinite-dimensional'' mean-field model, see e.g.\ \cite[\S 3]{baxter1982exactly}.\par

Instead of tying the model to a single fixed graph, a different approach is to study the Ising model on a \emph{(dynamical) random graph}, i.e.\ 
to average the partition function over a graph ensemble. This approach has prominently been pursued by Kazakov \cite{KAZAKOV_Ising} and Boulatov--Kazakov \cite{BOULATOV_Kazakov_Ising}. They provide an exact solution for the Ising model on random regular planar graphs and compute its critical exponents for regularity three and four, by relating the random Ising model partition function to a matrix integral.\par

In this paper we study the Ising model on a (dynamical) \emph{random regular graph} with no planarity assumption. Since random regular graphs are \emph{locally tree-like}, this is closely related to the works \cite{Dembo_Montanari_Locally_Tree_Like,dembo2013factor,dommers2014ising}, where the Ising model is studied on locally tree-like graphs. 
These works rigorously confirm the predictions in \cite{Dorogovtsev:2002ix,leone2002ferromagnetic}, and establish that the phase transition coincides with that of the \emph{Bethe lattice model}, i.e.\ the Ising model on the ``interior'' of the infinite regular Cayley tree, see \cite{Chen1974Hopping} and also \cite[\S 4]{baxter1982exactly}. The critical exponents are found to be in the mean field universality class.\par

The articles mentioned above use probabilistic methods. In contrast, the present article uses techniques from \emph{analytic combinatorics} to solve the Ising model on random regular graphs. 
Our computation of the free energy function in the thermodynamic limit, i.e.\ the limit where the number of vertices becomes large, Theorem~\ref{thm:free_energy}, is based on a formula for the asymptotic number of edge-bicolored regular graphs with certain vertex incidence constraints, obtained by two of the present authors with Meroni \cite{BMW_bicolored,BMW_multicolor}. Concretely, we study the partition function
\begin{equation}
  \label{eq:random_partition_fct}
  Z_{n,k}(\beta, h) = \sum_{G \in \GG_{n,k}} \frac{1}{|\Aut(G)|} Z_G(\beta,h),
\end{equation}
where $\GG_{n,k}$ is the set of isomorphism classes of $k$-regular graphs $G$ with $n$ vertices and self-edges and multiple edges allowed. For a single graph $G$, 
\begin{align}
\label{eq:ZG}
  Z_G(\beta,h) = \sum_{\sigma\in \{\pm 1\}^{|V_G|}} e^{-\beta H_G(\sigma)} \quad \text{with} \quad H_G(\sigma) = -J\!\!\! \sum_{\{i,j\}\in E_G} \sigma_i\sigma_j - h \sum_{i \in V_G} \sigma_i
\end{align}
is the usual partition function of the Ising model on the graph $G$. Here, $\beta = (k_B T)^{-1}$ is the inverse thermodynamic temperature and $J \in \RR_{> 0}$ is a ferromagnetic coupling parameter that we fix once and for all throughout this article.
The automorphism group $\Aut(G)$ in \eqref{eq:random_partition_fct} may contain automorphisms that flip self-edges or permute multiple edges between one pair of vertices. 
This convention corresponds to choosing a random half-edge--labeled graph, i.e.~a graph in which each half-edge carries a unique label.\par

We define the \emph{free energy function} in the \emph{thermodynamic limit} as 

\begin{equation} 
  \label{eq:def_free_energy}
  f(\beta,h) := \frac{1}{\beta}
  \begin{cases}
    \lim_{n\to\infty} \frac{1}{n} \log \frac{Z_{n,k}(\beta, h)}{Z_{n,k}(0, 0)} & \text{ if $k$ is even,} \\
    \lim_{n\to\infty} \frac{1}{2n} \log \frac{Z_{2n,k}(\beta, h)}{Z_{2n,k}(0, 0)} & \text{ if $k$ is odd. }
  \end{cases}
\end{equation}

Distinguishing the two cases is necessary here, since $Z_{n,k}(\beta, h) =0$, whenever both $n$ and $k$ are odd, because finite odd-regular graphs exist only on an even number of vertices. The normalization by $Z_{n,k}(0,0)$ accounts for the growing number of graphs and makes $f(\beta,h)$ finite. The function defined in \eqref{eq:def_free_energy} is an \emph{annealed free energy function}, see e.g.\ \cite{FixedMagnetization} for a recent treatment. This differs from the \emph{quenched} random graph model, in which one first samples the graph and then studies the Ising model on that fixed graph.\par  

We prove that the Ising model on random $k$-regular graphs admits a second-order phase transition at the critical temperature $T_c = 2J \big( k_B \log\big(\tfrac{k}{k-2}\big) \big)^{-1}.$
This is the critical temperature of the Bethe lattice model, i.e.\ the Ising model on the ``interior'' of a Cayley tree with regularity $k$ \cite{Chen1974Hopping}, see also \cite[\S 4]{baxter1982exactly}. Such behavior has also been predicted for locally tree-like models in \cite{dommers2014ising}.
Our main result is the following expression for the free energy function.

\begin{theorem}
  \label{thm:free_energy}
  Let $k\geq 3$. The free energy of the random $k$-regular graph Ising model is
  \begin{equation}
    \label{eq:free_energy_theorem}
    f(\beta,h) = \frac{1}{\beta} \left( \log(\cosh(\beta h))+ \frac{k}{2}\log(\cosh(\beta J)) +  \log(k! \cdot V_k(\theta^\star(\beta,h))) \right).
  \end{equation}
  Here, with 
  $\phi = \arctan\left( \sqrt{\tanh(\beta J)} \right) \in (0,\pi/4)$,
$V_k(\theta)$ is the function
  \begin{equation*}
    V_k(\theta) = \frac{(1+\tanh(\beta J))^{k/2}}{2 \cdot k!}\left( (1+\tanh(\beta h))\cos^k(\theta - \phi) + (1-\tanh(\beta h))\cos^k(\theta + \phi) \right),
  \end{equation*}
and $\theta^\star(\beta,h)$ is the location of a global maximum of $|V_k(\theta)|$.

Let $\beta_c^{\mathrm{tree}} = \frac{1}{k_B T_c}= \frac{1}{2J}\log(\frac{k}{k-2})$.
The function $|V_k(\theta)|$ has exactly one global maximum
\begin{enumerate}
\item in the open interval $(0,\frac{\pi}{2}-\phi)$ if $h > 0$.
\item in the open interval $(-\frac{\pi}{2}+\phi,0)$ if $h < 0$.
\item in the open interval $(0,\frac{\pi}{2}-\phi)$ if $h =0$ and $\beta > \beta_c^{\mathrm{tree}}$.
\item at $\theta = 0$ if $h=0$ and $\beta < \beta_c^{\mathrm{tree}}$.
\end{enumerate}
At $(\beta, h) = (\beta_c^{\rm tree}, 0)$, $f$ can be completed via \eqref{eq:free_energy_theorem} to be continuous and differentiable at that point.
\end{theorem}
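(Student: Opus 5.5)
We plan to write $Z_{n,k}(\beta,h)$ as a coefficient extraction in a generating function and then evaluate it asymptotically by a saddle-point argument.

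\textbf{Step 1: a Gaussian-integral representation.} For each edge we write
\[
e^{\beta J\sigma_i\sigma_j}=\cosh(\beta J)\,(1+t\,\sigma_i\sigma_j), \qquad t=\tanh(\beta J),
\]
and for each vertex
\[
e^{\beta h\sigma_i}=\cosh(\beta h)\,(1+u\,\sigma_i), \qquad u=\tanh(\beta h).
\]
This produces the prefactor $\cosh(\beta h)^n\cosh(\beta J)^{kn/2}$, which is exactly the first two terms of \eqref{eq:free_energy_theorem}. Expanding each edge factor, an edge contributes either $1$ or $t\,\sigma_i\sigma_j$. This is the edge-bicoloring: a generic graph is summed over half-edge-labeled pairings, and the summation over $\sigma_i$ imposes a parity constraint at each vertex on the number of incident $t$-colored half-edges.

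The average over half-edge-labeled $k$-regular graphs with weight $1/|\Aut|$ is Wick pairing. Hence $Z_{n,k}$ equals $n!\,[\lambda^n]$ of
\[
\exp\!\big(\lambda\,\langle P_k(x,y)\rangle\big),
\]
where $\langle\cdot\rangle$ is a Gaussian expectation and $P_k$ is the one-vertex weight. Concretely, we diagonalize the $2\times2$ propagator in the spin basis. Equivalently, we pair the spin-up and spin-down half-edges with covariance matrix $\frac{1}{2}\begin{pmatrix}1+t&1-t\\1-t&1+t\end{pmatrix}$ rescaled. A vertex with spin $\pm$ contributes $(1\pm u)\,z_\pm^k/k!$, giving
\[
Z_{n,k}\propto \Big\langle \Big(\sum_\pm \tfrac{1\pm u}{2}\,\tfrac{z_\pm^k}{k!}\Big)^n\Big\rangle .
\]
In polar coordinates the Gaussian measure becomes $\frac{1}{2\pi}\int e^{-r^2/2}\,r\,dr\,d\theta$. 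The change of variables $z_\pm=r\sqrt{1+t}\cos(\theta\mp\phi)$ with $\tan^2\phi=t$ reproduces the covariance, since
\[
\cos^2(\theta-\phi)+\cos^2(\theta+\phi)=1+\cos 2\phi\cos2\theta, \qquad \cos 2\phi=\frac{1-t}{1+t}.
\]
This is precisely the integral formula of \cite{BMW_bicolored,BMW_multicolor}. It gives
\[
Z_{n,k}(\beta,h)=C_n\int_0^{2\pi}V_k(\theta)^n\,d\theta\cdot\int_0^\infty r^{kn+1}e^{-r^2/2}\,dr .
\]
The radial integral and $C_n$ are independent of $\beta$ and $h$, so they cancel in the ratio with $Z_{n,k}(0,0)$. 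At $\beta=h=0$ we have $V_k\equiv 1/k!$, which accounts for the factor $k!$ inside the logarithm.

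\textbf{Step 2: Laplace's method.} We get
\[
\frac1n\log\int_0^{2\pi}V_k^n\,d\theta\to\log\max_\theta|V_k(\theta)|,
\]
provided the integral does not vanish through cancellation. For $k$ even, $V_k\ge0$ and there is nothing to check. For $k$ odd we pass to $2n$, so $V_k^{2n}\ge0$. This explains the parity convention in \eqref{eq:def_free_energy}. This step yields \eqref{eq:free_energy_theorem}.

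\textbf{Step 3: locating the maximum (the main obstacle).} By $\theta\mapsto\theta+\pi$ symmetry (up to sign) and the reflection $\theta\mapsto-\theta$, which swaps $u\to-u$, it suffices to study $\theta\in(-\pi/2,\pi/2]$.

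Let $a=1+u$ and $b=1-u$. First, $\cos(\theta\mp\phi)$ are both positive on $(-\pi/2+\phi,\pi/2-\phi)$. Outside this interval one cosine has a negative sign, and a direct comparison shows that $|V_k|$ is smaller there than at the reflected interior point. Next, for $h>0$ we compare $V_k(\theta)$ with $V_k(-\theta)$ for $\theta>0$: the difference is $(a-b)\big(\cos^k(\theta-\phi)-\cos^k(\theta+\phi)\big)>0$. This places every maximizer in $[0,\pi/2-\phi)$. We then rule out $\theta=0$ because $V_k'(0)\propto(a-b)\,k\cos^{k-1}\phi\,\sin\phi>0$. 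The case $h<0$ follows by symmetry.

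For uniqueness we use the substitution $s=\tan\theta$, so that
\[
V_k\propto\cos^k\theta\,\big(a(1+st_0)^k+b(1-st_0)^k\big)\cos^k\phi, \qquad t_0=\tan\phi=\sqrt t .
\]
The critical-point equation then reduces to an equation of the form
\[
\frac{a(1+st_0)^{k-1}(t_0-s)}{b(1-st_0)^{k-1}(t_0+s)}=1 .
\]
The logarithm of the left side has a monotonicity or concavity property in $s$ on $(0,1/t_0)$, which gives at most one interior critical point besides $s=0$ (when $h=0$).

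For $h=0$ the function $V_k$ is even. We expand
\[
V_k(\theta)\approx V_k(0)\Big(1-\tfrac{k}{2}\theta^2\big(1-(k-1)\tan^2\phi\big)+\dots\Big).
\]
So $\theta=0$ is a local maximum if and only if $(k-1)t<1$, that is, $\tanh(\beta J)<1/(k-1)$. This is equivalent to $\beta<\frac1{2J}\log\frac{k}{k-2}$, which is $\beta^{\mathrm{tree}}_c$. Combined with the uniqueness count, this gives cases 3 and 4.

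\textbf{Step 4: behavior at the critical point.} At $(\beta_c^{\mathrm{tree}},0)$ the maximizer $\theta^\star\to0$ continuously from all sides. This is the implicit function theorem away from criticality, plus the quartic term of the expansion at criticality, whose sign we verify to be negative. Then $\log V_k(\theta^\star)$ is continuous there. Differentiability follows from the envelope theorem, $\partial_{\beta,h}\log V_k(\theta^\star)=(\partial_{\beta,h}\log V_k)(\theta^\star)$, whose right-hand side is continuous in $(\beta,h)$ because $\theta^\star$ is.
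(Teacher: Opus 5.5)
Your proposal is correct in substance, but it reaches the formula by a different and more elementary route than the paper.

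In Step~1 you derive the one-dimensional integral directly from the spins, by Wick's theorem with covariance $1+t\sigma\sigma'$. The paper instead passes through the high-temperature expansion (Proposition~\ref{prop:combinatorial_formula_partition_fct}) and the edge-bicolored graph enumeration of \cite{BMW_bicolored}. The two integrals agree under $z_\pm=x\pm\sqrt{\kappa}\,y$, which is the paper's splitting of $V_k$ into two $k$-th powers.

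The substantive difference is Step~2. The paper imports the full stationary-phase asymptotics of Proposition~\ref{prop:asymptotic_limit_Z_n,k}. That requires $B_k\neq 0$, hence Lemma~\ref{lem:nondegenerate}. On the circle $B_k=-\partial_\theta^2 V_k/V_k$, which vanishes at $(\beta_c^{\rm tree},0)$; this is why $f$ must be ``completed'' there. Your exact cancellation of the $(\beta,h)$-independent factors, together with $\|V_k\|_{L^n}\to\|V_k\|_{\infty}$, sees only the exponential rate, but that is all the free energy needs. Lemma~\ref{lem:nondegenerate} becomes superfluous, and the formula holds at the critical point itself. What the paper's route buys is the full prefactor and the combinatorial reading used in Section~\ref{sec:magnetization_combinatorial}.

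Step~3 is the paper's Lemma~\ref{lem:maximum} written in the coordinate $s=\tan\theta$: your left-hand side equals $\frac{a}{b}\,g(\theta)$ with $g$ from \eqref{eq:gdef}. Your comparisons of $V_k(\theta)$ with $V_k(-\theta)$ and with a reflected point replace the positivity Lemma~\ref{lmm:positiveVk}.

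Several points need repair.
\begin{itemize}
\item At $\beta=h=0$ one has $V_k=\cos^k\theta/k!$, not the constant $1/k!$. The factor $k!$ comes from its maximum.
\item The generating function is $\langle\exp(\lambda P_k)\rangle$, not $\exp(\lambda\langle P_k\rangle)$.
\item In Step~3, the logarithm of your left-hand side is not concave in general: for small $s>0$ its second derivative has the sign of $(k-1)t^3-1$. Nor is it monotone when $(k-1)t>1$. What does hold is unimodality. Its derivative $2\sqrt t\,\bigl(\frac{k-1}{1-ts^2}-\frac{1}{t-s^2}\bigr)$ vanishes on $(0,\sqrt t)$ at most once, at $s^2=\frac{(k-1)t-1}{k-1-t}$, and only if $(k-1)t>1$. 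This is $\cos 2\theta=\frac{k}{k-2}\cos 2\phi$, the paper's $\theta_0$. For $s\in[\sqrt t,1/\sqrt t)$ the left-hand side is $\le 0$. This unimodality is what yields cases 1--4.
\item Your ``reflected interior point'' can be taken as $\theta''=\pi-2\phi-\theta$. Then $\cos(\theta''+\phi)=|\cos(\theta+\phi)|$, and $\cos(\theta''-\phi)-\cos(\theta-\phi)=-2\cos(\theta+\phi)\cos 2\phi>0$ whenever $\cos(\theta+\phi)<0$.
\item In Step~4, the maximizer is not continuous on a neighborhood of $(\beta_c^{\rm tree},0)$. It jumps across the ray $\{h=0,\ \beta>\beta_c^{\rm tree}\}$ that emanates from this point, so the envelope formula does not give a continuous gradient there. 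Continuity of $\max_\theta|V_k|$ is immediate from compactness. For differentiability, apply Danskin's theorem to $\max_\theta V_k^2$ at the critical point itself. There the maximizer is unique modulo $\pi$, because your log-ratio is strictly decreasing from $0$ when $(k-1)t=1$.
\end{itemize}
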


The proof of Theorem~\ref{thm:free_energy} relies on the graphical expansion of the Ising model, a reinterpretation of $Z_{n,k}$ as an enumeration of edge-bicolored graphs (see Section~\ref{sec:combinatorics}), and a formula for the asymptotic number of such graphs. Having obtained the expression~\eqref{eq:free_energy_theorem} for the free energy, we can determine the \emph{phase transitions} by simply analyzing which saddle points of $V_k(\theta)$ are global maxima. Here, a phase transition means that there exists a point $(\beta_c,h_c)$ of non-analyticity of the free energy function in the thermodynamic limit.
For nonzero external magnetic field $h \neq 0$, we find that the model does not exhibit any phase transitions. If, on the other hand, $h = 0$, the model has a second-order phase transition at $\beta_c^{\rm tree}$, as well as a first-order phase transition as $h$ passes over the ray $\{(\beta,0) \,:\, \beta > \beta_c^{\rm tree}\}$.

\begin{corollary}
  \label{cor:phase_transitions}
  Let $k\geq 3$ be an integer. The Ising model on a random \mbox{$k$-regular} graph has a second-order phase transition at $(\beta_c, h_c) = (\beta_c^{\mathrm{tree}}, 0)$ where $\beta_c^{\mathrm{tree}} = \frac{1}{2J}\log\big(\frac{k}{k-2}\big)$. That means $f(\beta,h)$ is once but not twice differentiable in $h$ at that point. 
  Furthermore, for $h=0$ and $\beta > \beta_c^{\rm tree}$, the free energy $f(\beta, h)$ is continuous, but not differentiable in $h$. Beyond these singularities, $f(\beta,h)$ is analytic for all other values of $h \in \RR$ and $\beta > 0$.
\end{corollary}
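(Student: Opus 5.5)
The corollary follows from Theorem~\ref{thm:free_energy} by analysing how the maximiser $\theta^\star(\beta,h)$ depends on $(\beta,h)$. Write $t=\tanh(\beta h)$, $\tau=\tanh(\beta J)$, so that
\[
V_k(\theta)=C\bigl((1+t)\cos^k(\theta-\phi)+(1-t)\cos^k(\theta+\phi)\bigr).
\]
The critical points satisfy
\[
(1+t)\cos^{k-1}(\theta-\phi)\sin(\theta-\phi)+(1-t)\cos^{k-1}(\theta+\phi)\sin(\theta+\phi)=0 .
\]
The plan is as follows.

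First I show that $f$ is analytic wherever the maximiser is unique and nondegenerate. By the theorem, for $h\neq 0$ the global maximum of $|V_k|$ is unique and lies in the open interval where $V_k>0$. If I also show that $V_k''(\theta^\star)<0$ there, the implicit function theorem makes $\theta^\star$ real-analytic in $(\beta,h)$. Then $f=\beta^{-1}\bigl(\log\cosh\beta h+\tfrac k2\log\cosh\beta J+\log(k!V_k(\theta^\star))\bigr)$ is analytic. To get nondegeneracy, I would show that on $(-\tfrac\pi2+\phi,\tfrac\pi2-\phi)$ the function $\tfrac{d}{d\theta}\log V_k$ has at most three zeros and only one maximum for $h\neq0$. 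Substituting $u=\tan\theta$ turns the critical point equation into a polynomial/monotonicity statement, and counting its sign changes gives the claim. The same argument covers $h=0$ with $\beta<\beta_c^{\rm tree}$ (maximum at $0$) and $h=0$ with $\beta>\beta_c^{\rm tree}$ (two symmetric maxima $\pm\theta_0$).

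Next, the first-order line. For $h=0$ and $\beta>\beta_c^{\rm tree}$, symmetry $V_k(\theta;-h)=V_k(-\theta;h)$ gives $f(\beta,h)=f(\beta,-h)$. Continuity follows because $\max|V_k|$ is continuous in the parameters. By the envelope theorem,
\[
\partial_h f=\tanh(\beta h)+\beta^{-1}\,\frac{\partial_h V_k}{V_k}\Big|_{\theta^\star},
\quad\text{where}\quad \partial_h V_k\propto\beta(1-t^2)\bigl(\cos^k(\theta-\phi)-\cos^k(\theta+\phi)\bigr).
\]
As $h\to0^\pm$ we have $\theta^\star\to\pm\theta_0$ with $\theta_0>0$, so the one-sided derivatives are nonzero with opposite signs. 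Hence $f$ is not differentiable in $h$ on this ray.

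Finally, the critical point, which is the main obstacle. Expand at $h=0$: $\log V_k(\theta)\approx \mathrm{const}+a(\beta)\theta^2+b\theta^4+c\,t\,\theta+\dots$. Here $a(\beta)$ changes sign exactly at $\beta_c^{\rm tree}$; the check is that $\tan^2\phi=\tau$ and $a=0$ correspond to $(k-1)\tau=1$, i.e.\ $\tau=1/(k-1)$, which is equivalent to $e^{2\beta J}=k/(k-2)$. I would also need to verify $b<0$ there. This is a Landau-type expansion. At $\beta=\beta_c^{\rm tree}$ minimising over $\theta$ gives $\theta^\star\sim (c t/|b|)^{1/3}$, so $\log V_k(\theta^\star)-\log V_k(0)\sim |h|^{4/3}$. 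Therefore $f$ is $C^1$ in $h$ with $\partial_h f=0$ at $h=0$, but $\partial_h^2 f\sim |h|^{-2/3}$ diverges, so $f$ is not twice differentiable. The continuity and differentiability completion at the point follows from the same expansion. The delicate parts are the signs of $a$ and $b$ and the uniformity of the Taylor estimates.
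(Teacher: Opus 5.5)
Your overall route is the paper's. You use analyticity from a unique nondegenerate maximiser plus the implicit function theorem (Lemmas~\ref{lem:maximum}, \ref{lem:nondegenerate} and Corollary~\ref{cor:smoothf}). You get a nonzero one-sided $h$-derivative on the ray $h=0$, $\beta>\beta_c^{\rm tree}$ from the envelope formula, i.e.\ the spontaneous magnetization. And a Landau expansion at $(\beta_c^{\rm tree},0)$ gives $\theta^\star\asymp h^{1/3}$, as in the $\delta=3$ computation. Your first-order-line argument is fine as written.

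The step that does not work as described is nondegeneracy of the maximiser for $h\neq 0$, which the implicit function theorem needs. If ``only one maximum'' means one local maximum of $V_k$ on $(-\frac\pi2+\phi,\frac\pi2-\phi)$, it is false. For $\beta>\beta_c^{\rm tree}$ and small $h>0$, the three nondegenerate zero-field critical points $-\theta^\star(\beta,0)$, $0$, $\theta^\star(\beta,0)$ persist, so a metastable local maximum survives near $-\theta^\star(\beta,0)$. Moreover, a bound of three on the number of zeros of $V_k'$, however obtained from sign changes, cannot exclude that the global maximiser is itself a triple zero. That is exactly the degeneracy to be ruled out, and it does occur at $(\beta_c^{\rm tree},0)$, where $\theta=0$ is the only critical point and a triple zero of $V_k'$.

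The missing idea is to factor through $g$ from \eqref{eq:gdef}. In the paper's notation ($\tau=\tanh(\beta h)$, $\kappa=\tanh(\beta J)$, the reverse of yours),
\[
V_k'(\theta)=\frac{(1+\kappa)^{k/2}}{2\,(k-1)!}\,(1+\tau)\cos^{k-1}(\theta+\phi)\sin(\theta+\phi)\Bigl(g(\theta)-\frac{1-\tau}{1+\tau}\Bigr).
\]
So at a critical point in $(0,\frac\pi2-\phi)$, $V_k''(\theta^\star)$ is a nonzero multiple of $g'(\theta^\star)$. On that interval $g'$ vanishes at most at the point $\theta_0$ of \eqref{eq:theta_0}. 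There $g(\theta_0)\ge 1>\frac{1-\tau}{1+\tau}$ for $h>0$, so the maximiser is nondegenerate; this is Lemma~\ref{lem:nondegenerate}. For $h=0$, $\beta<\beta_c^{\rm tree}$, nondegeneracy at $\theta=0$ is just $a(\beta)\neq 0$, which your expansion already gives.

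There are two smaller points in the last step. First, the sign you leave open is $b=C_{4,0,0}=-\frac{k^2(k-2)}{12(k-1)}<0$, from \eqref{eq:V_Expansion_kappa}. Since the quadratic coefficient vanishes at $\kappa_c$, this is also the quartic coefficient of $\log V_k$. Second, the local expansion only controls $f(\beta_c^{\rm tree},h)$ once you know $\theta^\star(\beta_c^{\rm tree},h)\to 0$ as $h\to 0$. That requires $\theta=0$ to be the unique global maximiser at $(\beta_c^{\rm tree},0)$, which holds for the following reason. For $\kappa=\kappa_c=\frac{1}{k-1}$ one has $\cos(2\phi)=\frac{k-2}{k}$, so $g'$ is a positive multiple of $(k-2)(\cos 2\theta-1)\le 0$ on $[0,\frac\pi2-\phi)$. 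Hence $g=1$ there only at $\theta=0$, and the comparison $|V_k(0)|>|V_k(\pi/2)|$ from Lemma~\ref{lem:maximum} finishes it. Finally, you should be maximising, not minimising, over $\theta$ at that step.
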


Further straightforward analysis lets us obtain the \emph{critical exponents} of our model at the phase transition. In the following we define various physical quantities associated with the Ising model and their critical exponents. See \cite[\S 1]{baxter1982exactly} for more details. The \emph{magnetization} is defined by
\begin{equation}
  \label{eq:def_mag}
  m(\beta,h) := \frac{\partial}{\partial h} f(\beta,h).
\end{equation}
By Theorem~\ref{thm:free_energy}, this derivative is well-defined almost everywhere, except for a possible discontinuity at $h=0$. The \emph{spontaneous magnetization} is the one-sided limit
\[
  m_0(\beta) := \lim_{h \searrow 0} m(\beta,h).
\]
Since $m$ is an odd function of $h$, if $m_0\neq 0$, then $m$ has a discontinuity in $h$. As stated in Corollary~\ref{cor:phase_transitions}, a phase transition at $(\beta,h) = (\beta_c,0)$ is \emph{second-order} if $m$ is continuous at $ (\beta_c,0)$, but $f$ is not twice differentiable in $h$ at this point. In that case, we may define the \emph{critical exponent $\beta^\mathrm{mag}$} of that phase transition. If we define the normalized temperature as $t := (T -T_c)/T_c$ and the spontaneous magnetization behaves asymptotically like
\begin{equation}
  \label{eq:def_beta^{mag}}
  m_0(t) \sim (-t)^{\beta^\mathrm{mag}} \quad \text{as } t \nearrow 0,
\end{equation}
then 
$\beta^\mathrm{mag}$ is the critical magnetization exponent of the model at $(\beta_c^\mathrm{tree},0)$. Here, the notation $f(t) \sim g(t) \text{ as } t \nearrow 0$ means that the limit $\lim_{t\nearrow 0} f(t)/g(t)$ exists and is nonzero. \par
Moreover, if the magnetization scales near the phase transition point as
\[
  m(\beta_c^{\rm tree},h) \sim h^{1/\delta} \quad \text{as } h \searrow 0,
\]
then $\delta$ is a critical exponent at $(\beta_c^{\rm tree}, 0)$. The second derivative of the free energy is the \emph{magnetic susceptibility}
\begin{equation*}
  \chi(\beta,h) := \frac{\partial^2}{\partial h^2} f(\beta, h).
\end{equation*}
If, at critical temperature and zero magnetic field, the susceptibility scales as 
\[
  \chi(t, 0) \sim t^{-\gamma} \quad \text{as } t \searrow 0,
\]
$\gamma$ is another critical exponent at $(\beta_c^{\rm tree}, 0)$. Finally, we can define the specific heat critical exponent $\alpha$ by comparing the analytic continuation of the high-temperature free energy function with the low-temperature free energy. Define $f_{\rm ht}(t) := \frac{k}{2\beta(t)} \log(\cosh(\beta(t) J))$. If the limit
\begin{equation}
  \label{eq:singular_free_energy}
  \lim_{t \searrow 0} \frac{\log \left| f(-t,0) - f_{\rm ht}(-t) \right|}{\log t}
\end{equation}
is well-defined and equals $2-\alpha$, then $\alpha$ is the specific heat critical exponent. In particular, $2-\alpha$ should be thought of as the order of the phase transition. As a second consequence of our main result, Theorem~\ref{thm:free_energy}, we obtain the following characterizations of the phase transitions.

\begin{corollary}
 \label{cor:critical_exponents}
    Let $k\geq 3$ be an integer. For $h=0$ and $\beta > \beta_c^{\rm tree}$ the random $k$-regular graph Ising model has a nonzero spontaneous magnetization given by
    \begin{equation}
    m_0(\beta) = \frac{(1+u)^k - (1-u)^k}{(1+u)^k + (1-u)^k},
  \end{equation}
  where $u$ is the unique positive solution to the equation
  \begin{equation}
    \frac{\tanh(\beta J) - u}{\tanh(\beta J) + u} = \left( \frac{1-u}{1+u} \right)^{k-1}.
  \end{equation}
  The second-order phase transition at $(\beta_c^{\rm tree}, 0)$ is characterized by the critical exponents
  \begin{equation}
    \label{eq:crit_exp}
    \alpha=0,\quad \beta^{\rm mag} = \frac{1}{2},\quad \gamma=1 \quad \text{and} \quad \delta = 3.
  \end{equation}
\end{corollary}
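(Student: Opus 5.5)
The plan is to derive everything from Theorem~\ref{thm:free_energy} via a Landau-type analysis of $\log V_k(\theta)$ near $\theta=0$. Write $\tau=\tanh(\beta h)$, $w=\tanh(\beta J)=\tan^2\phi$, $s=\tan\phi$, $t=\tan\theta$, and $c_\mp=\cos(\theta\mp\phi)$.

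\emph{Magnetization.} By Theorem~\ref{thm:free_energy}, $\theta^\star$ is a nondegenerate interior critical point of $V_k$ away from the critical point, and $V_k$ is analytic in $(\theta,h)$. So the envelope theorem gives
\[
m=\frac{1}{\beta}\partial_h\Big(\log\cosh(\beta h)+\log V_k(\theta,h)\Big)\Big|_{\theta=\theta^\star}
=\tau+(1-\tau^2)\frac{c_-^k-c_+^k}{(1+\tau)c_-^k+(1-\tau)c_+^k}.
\]
Take $h\searrow 0$. By item~3 of Theorem~\ref{thm:free_energy}, $\theta^\star$ converges to the positive maximizer at $h=0$, so $m_0=(c_-^k-c_+^k)/(c_-^k+c_+^k)$. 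Put $u=\tan\theta\tan\phi$. Then $c_\mp=\cos\theta\cos\phi\,(1\pm u)$, which gives the stated formula for $m_0$. The stationarity equation $\partial_\theta V_k=0$ at $h=0$ reads
\[
c_-^k\tan(\theta-\phi)+c_+^k\tan(\theta+\phi)=0 .
\]
Using $\tan(\theta\mp\phi)=(t\mp s)/(1\pm u)$, this becomes $\big(\tfrac{1+u}{1-u}\big)^{k-1}=\tfrac{s-t}{s+t}$. Multiplying numerator and denominator by $s$ and inverting yields
\[
\frac{w-u}{w+u}=\Big(\frac{1-u}{1+u}\Big)^{k-1}.
\]
Uniqueness of the positive root follows from the uniqueness of the maximizer in $(0,\frac{\pi}{2}-\phi)$ in item~3, since $u$ is monotone in $\theta$ there.

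\emph{Exponents $\beta^{\rm mag},\gamma,\delta$.} Take logarithms in the equation above and expand in $u$:
\[
-\tfrac{2u}{w}-\tfrac{2u^3}{3w^3}+\dots=-2(k-1)\big(u+\tfrac{u^3}{3}\big)+\dots .
\]
Dividing by $u$ gives
\[
u^2\Big(\tfrac{1}{w^3}-(k-1)\Big)\approx 3\Big((k-1)-\tfrac{1}{w}\Big).
\]
The right side vanishes exactly at $w=1/(k-1)$, i.e.\ $e^{2\beta J}=k/(k-2)$, which is $\beta_c^{\rm tree}$. The coefficient $\tfrac{1}{w^3}-(k-1)=(k-1)^3-(k-1)$ is positive there for $k\ge3$. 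Hence $u^2\sim C(\beta-\beta_c^{\rm tree})$, and since $m_0=ku+O(u^3)$ we get $\beta^{\rm mag}=\tfrac12$.

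For $h\neq0$ I redo the stationarity equation with the factors $(1\pm\tau)$ retained. To leading order this adds a term linear in $\tau$, giving the cubic $A(\beta)u+Bu^3\approx D\tau$ with $A\propto(\beta_c^{\rm tree}-\beta)$, and $m\approx\tau+ku\cdot(\text{const})$.
\begin{itemize}
\item At $\beta=\beta_c^{\rm tree}$ we have $A=0$, so $u\sim h^{1/3}$ and $\delta=3$.
\item For $\beta<\beta_c^{\rm tree}$ the solution is $u\approx D\tau/A$, so $\chi=\partial_h m|_{h=0}\propto 1/A\sim t^{-1}$ and $\gamma=1$.
\end{itemize}

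\emph{Exponent $\alpha$.} At $h=0$ and $\beta>\beta_c^{\rm tree}$,
\[
f-f_{\rm ht}=\tfrac1\beta\log\big(V_k(\theta^\star)/V_k(0)\big),
\]
because $f_{\rm ht}$ is exactly \eqref{eq:free_energy_theorem} evaluated at $\theta=0$ with $h=0$. Expanding $\log V_k(\theta)-\log V_k(0)=a(\beta)\theta^2-b\,\theta^4+O(\theta^6)$ with $a\propto \beta-\beta_c^{\rm tree}$ and $b>0$, the maximum value is $a^2/(4b)\sim t^2$. So the limit \eqref{eq:singular_free_energy} equals $2$, and $\alpha=0$.

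\emph{Main obstacle.} The step I expect to be the hardest is making this Landau expansion rigorous. One has to check that the quartic coefficient $b$ (equivalently $\tfrac{1}{w^3}-(k-1)$ at criticality) is strictly nonzero with the right sign. One also has to justify that the asymptotics of $\theta^\star(\beta,h)$ hold jointly near $(\beta_c^{\rm tree},0)$. For the latter I would apply the implicit function theorem to $\partial_\theta V_k/\theta$ as a function of $(\theta^2,\beta)$ when $h=0$, and a Newton-polygon argument for the cubic when $h\neq0$.
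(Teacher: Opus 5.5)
Your route matches the paper's: the envelope theorem for $m$, the substitution $u=\tan\theta^\star\tan\phi$, and a Landau expansion near $(\beta_c^{\rm tree},0)$, with $\alpha$ handled exactly as the paper does. Expanding the exact stationarity equation in $u$ instead of $V_k$ in $\theta$ is a real convenience. With the $(1\pm\tau)$ factors kept, it reads $\beta h=\operatorname{artanh}(u/w)-(k-1)\operatorname{artanh}(u)$. This gives the coefficients $1/w-(k-1)$ and $\tfrac13(w^{-3}-(k-1))$ in closed form, and even $\chi(\beta,0)=\beta(1+w)/(1-(k-1)w)$ for $\beta<\beta_c^{\rm tree}$.

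\textbf{The gap: uniqueness of the positive root does not follow from uniqueness of the maximizer.} Under $\theta\mapsto\tan\theta\tan\phi$, positive roots correspond to \emph{all} critical points of $V_k$ in $(0,\pi/2)\setminus\{\pi/2-\phi\}$. A second root could therefore be a local minimum or a degenerate critical point, which item~3 of Theorem~\ref{thm:free_energy} does not rule out. Moreover, roots with $u>1$ correspond to $\theta\in(\pi/2-\phi,\pi/2)$, about which item~3 says nothing.

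\textbf{How to repair it.} You need that $\theta^\star$ is the only critical point in $(0,\pi/2)$. The monotonicity analysis of $g$ in the proof of Lemma~\ref{lem:maximum} establishes exactly this, so cite that instead. Alternatively, argue directly in $u$ by splitting the range:
\begin{itemize}
\item For $w\le u\le 1$, the left side $\tfrac{w-u}{w+u}$ is $\le 0$ and the right side is $\ge 0$, and they are never both zero.
\item For $u>1$, one has $\bigl(\tfrac{u-1}{u+1}\bigr)^{k-1}<\tfrac{u-1}{u+1}<\tfrac{u-w}{u+w}$ because $w<1$, so the two sides differ in absolute value.
\item On $(0,w)$, the equation is $F(u)=0$ with $F(u)=\operatorname{artanh}(u/w)-(k-1)\operatorname{artanh}(u)$. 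Here $F(0)=0$ and $F'(0)=1/w-(k-1)<0$. Also $F'$ vanishes only at $u_0^2=w((k-1)w-1)/(k-1-w)\in(0,w^2)$, and $F\to+\infty$ as $u\to w$. Hence there is exactly one root.
\end{itemize}
The same computation shows that no positive root exists for $\beta\le\beta_c^{\rm tree}$.

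\textbf{A small slip.} The stationarity equation gives $\bigl(\tfrac{1+u}{1-u}\bigr)^{k-1}=\tfrac{s+t}{s-t}$, not $\tfrac{s-t}{s+t}$. Your equation after inverting is nonetheless the correct one.
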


The nonzero spontaneous magnetization for $h=0$ and $\beta > \beta_c^{\rm tree}$ is explained by the spontaneous breaking of the global $\mathbb{Z}/2\mathbb{Z}$ spin-flip symmetry.
The critical exponents \eqref{eq:crit_exp} coincide with those of the mean-field model \cite[\S 3]{baxter1982exactly}. This agrees with the expectation that our model belongs to the ``infinite-dimensional'' mean field universality class, since random regular graphs are \emph{locally tree-like} \cite{Dembo_Montanari_Locally_Tree_Like}.\par

\paragraph{Outline of the article.} In Section~\ref{sec:combinatorics}, we recall the graphical expansion of the Ising model to connect the partition function $Z_G$ for a fixed graph $G$ to a sum over certain subgraphs of $G$. The average partition function $Z_{n,k}$ over all $k$-regular graphs is considered in Section~\ref{sec:limit_free_energy} where we relate it to the enumeration of specific edge-bicolored graphs. By studying the asymptotics of these graphs we prove Theorem~\ref{thm:free_energy}. In Section~\ref{sec:crit_exponents}, we compute the critical exponents and prove Corollaries~\ref{cor:phase_transitions} and \ref{cor:critical_exponents}. In the final Section~\ref{sec:magnetization_combinatorial}, we provide a purely combinatorial interpretation of the magnetization, again as an enumeration of edge-bicolored graphs.

\paragraph{Acknowledgments.}
Research at Perimeter Institute is supported in part by the Government of Canada through the Department of Innovation, Science and Economic Development Canada and by the Province of Ontario through the Ministry of Colleges and Universities.

\section{A combinatorial approach to the Ising model}
\label{sec:combinatorics}

The partition function $Z_G(\beta,h)$ of the Ising model on an individual graph $G$ can be written as a sum over certain subgraphs of $G$. This is sometimes called the high-temperature expansion of the Ising model and is originally due to van der Waerden \cite{vanderWaerden1941}. In this section, we briefly explain how the expansion works to make the paper self-contained. See \cite{duminil2016random} for a modern mathematical and more detailed treatment of expansions of the Ising model partition function.\par

By the definition \eqref{eq:ZG}, we have 
\begin{align*}
  Z_G(\beta,h) = 
\sum_{\sigma\in \{\pm 1\}^{|V_G|}} 
\left(
\prod_{\{i, j\} \in E_G} e^{\beta J \sigma_i \sigma_j} \right) \left( \prod_{v \in V_G} e^{\beta h \sigma_{v}} \right)
\end{align*}

It is convenient to make the following changes of variables.

\begin{notation}
\label{not:variable_changes}
We define $\kappa = \tanh(\beta J)$, $\tau = \tanh(\beta h)$, $\rho = \cosh(\beta J)$ and $\eta = \cosh(\beta h)$. Note that high temperature $T > T_c$ means $\kappa < \kappa_c = \tanh\big(\tfrac{1}{2} \log \big(\tfrac{k}{k-2}\big)\big) = \tfrac{1}{k-1}$.
\end{notation}

It follows from the identity $e^{K \sigma} = \cosh(K) (1 + \tanh(K) \cdot \sigma)$, which holds for $\sigma \in \{+1, -1\}$ and any $K \in \RR$ that 
$
    e^{\beta J \sigma_v \sigma_u} = \rho (1 + \kappa \sigma_v \sigma_u)$  and 
    $e^{\beta h \sigma_v} = \eta (1 + \tau \sigma_v).
$
Hence, 
\begin{equation}
\label{eq:ZGkappaeta}
    Z_G(\beta,h) = \rho^{|E_G|} \eta^{|V_G|} 
\sum_{\sigma\in \{\pm 1\}^{|V_G|}} 
\left( \prod_{\{i, j\} \in E_G} (1 + \kappa \sigma_i \sigma_j) \right) \left( \prod_{v \in V_G} (1 + \tau \sigma_{v}) \right).
\end{equation}
Expanding the products over $E_G$ and $V_G$ gives a sum over all subsets of $E_G$ and $V_G$. So,
\begin{equation}
\label{eq:ZGlong}
    Z_G(\beta,h) = \rho^{|E_G|} \eta^{|V_G|} 
\sum_{\gamma \subseteq E_G} 
\kappa^{|\gamma|}
\sum_{\mu \subseteq V_G}
\tau^{|\mu|}
\sum_{\sigma\in \{\pm 1\}^{|V_G|}} 
\left( \prod_{\{i, j\} \in \gamma} \sigma_i \sigma_j \right) \left( \prod_{v \in \mu} \sigma_{v} \right).
\end{equation}
\begin{lemma}
\label{lmm:evenodd}
Let $(\gamma, \mu)$ be a pair $\gamma \subseteq E_G$ and $\mu \subseteq V_G$.
We say $(\gamma, \mu)$ is matched if $\mu$ contains exactly those vertices which have an odd number of incident edges in $\gamma$.
We have the identity
\[
\sum_{\sigma\in \{\pm 1\}^{|V_G|}} 
\left( \prod_{\{i, j\} \in \gamma} \sigma_i \sigma_j \right) \left( \prod_{v \in \mu} \sigma_{v} \right)
=
\begin{cases}
2^{|V_G|} & \text{ if $(\gamma, \mu)$ is matched} \\
0 & \text{otherwise}
\end{cases}
\]
\end{lemma}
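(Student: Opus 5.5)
The plan is to rewrite the summand as a product of independent single-vertex factors and then sum over each spin separately. For a fixed pair $(\gamma,\mu)$ and each vertex $v\in V_G$, let $d_\gamma(v)$ be the number of edge-ends of $\gamma$ at $v$, where a self-edge in $\gamma$ at $v$ contributes $2$. Let $\mathbf{1}_\mu(v)\in\{0,1\}$ indicate whether $v\in\mu$. Then for every configuration $\sigma$ the summand factorizes as
\[
\left( \prod_{\{i, j\} \in \gamma} \sigma_i \sigma_j \right) \left( \prod_{v \in \mu} \sigma_{v} \right) = \prod_{v\in V_G} \sigma_v^{\,d_\gamma(v)+\mathbf{1}_\mu(v)} .
\]
This holds because each edge $\{i,j\}\in\gamma$ contributes one factor $\sigma$ for each of its two endpoints. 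A self-edge at $v$ contributes $\sigma_v^2$, consistent with counting it twice in $d_\gamma(v)$. Multiple edges are counted separately, which is harmless because $\gamma$ is a subset of the edge multiset $E_G$.

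Since the sum over $\sigma\in\{\pm1\}^{|V_G|}$ is a sum over the product set, the sum of this product equals the product of the single-vertex sums:
\[
\prod_{v\in V_G}\ \sum_{\sigma_v\in\{\pm1\}} \sigma_v^{\,d_\gamma(v)+\mathbf{1}_\mu(v)} .
\]
Each single-vertex sum equals $2$ if the exponent is even and $1+(-1)=0$ if it is odd.

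The product is therefore $2^{|V_G|}$ exactly when $d_\gamma(v)+\mathbf{1}_\mu(v)$ is even for every vertex $v$, and $0$ otherwise. The parity condition at $v$ says that $v\in\mu$ if and only if $d_\gamma(v)$ is odd. Over all vertices, this is precisely the statement that $(\gamma,\mu)$ is matched. This proves the identity.

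The only delicate point is the bookkeeping for self-edges and multi-edges. The phrase ``number of incident edges'' in the definition of matched must be read with self-edges counted twice, i.e.\ as the degree in the subgraph $\gamma$. I would state this convention explicitly when writing out the factorization step; nothing else is substantive.
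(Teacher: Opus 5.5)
Your proof is correct and follows the paper's argument: the paper notes that a matched pair makes every summand equal to $1$, and that otherwise some $\sigma_v$ appears to an odd power so the sum vanishes by the symmetry $\sigma_v \mapsto -\sigma_v$; your factorization into single-vertex sums $1+(-1)^{d_\gamma(v)+\mathbf{1}_\mu(v)}$ makes exactly this explicit. Your convention that self-edges count twice, so that ``number of incident edges'' means $\deg_\gamma(v)$ as in Proposition~\ref{prop:combinatorial_formula_partition_fct}, is a correct and worthwhile clarification that the paper leaves implicit.
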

\begin{proof}
If $(\gamma, \mu)$ is matched, the summands are all 1. Otherwise, 
there is some vertex $v$ such that $\sigma_v$ appears with an odd power in the sum.
By symmetry, the sum vanishes.
\end{proof}

Consequently, the partition function $Z_G(\beta,h)$ can be expressed as a sum over colorings of the graph $G$
weighted by a product over its vertices
\begin{proposition}
  \label{prop:combinatorial_formula_partition_fct}
A cancellation-free formula for the Ising model partition function is
\begin{equation*}
    Z_G(\beta,h) = \sum_{\gamma \subseteq E_G} \prod_{v\in V_G} \lambda_{v}^{(G,\gamma)},
\end{equation*}
where the effective vertex weights are defined with $\omega^{(G,\gamma)}_v = 2\eta\cdot \rho^{\degv_G(v)/2}\cdot \kappa^{\degv_\gamma(v)/2}$ as:

\begin{equation}
\label{eq:comb_vtx_weights}
    \lambda_{v}^{(G,\gamma)} = 
    \begin{cases}
        \omega^{(G,\gamma)}_v  & \text{if } \degv_\gamma(v) \text{ is even}, \\
        \omega^{(G,\gamma)}_v \cdot \tau & \text{if } \degv_\gamma(v) \text{ is odd},
    \end{cases}
\end{equation}
and 
where $\degv_G(v)$ is the degree of the vertex $v$ in the graph $G$ and 
$\degv_\gamma(v)$ is the degree of $v$ in the subgraph $\gamma$.
\end{proposition}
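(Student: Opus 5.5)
Start from the expansion \eqref{eq:ZGlong} and apply Lemma~\ref{lmm:evenodd} to the innermost sum over $\sigma$. For each fixed $\gamma \subseteq E_G$ there is exactly one $\mu \subseteq V_G$ making $(\gamma,\mu)$ matched: the set of vertices of odd degree in $\gamma$. For every other $\mu$ the sum over $\sigma$ vanishes. The double sum over $(\gamma,\mu)$ therefore collapses to a single sum over $\gamma$:
\[
Z_G(\beta,h) = \rho^{|E_G|}\eta^{|V_G|}\,2^{|V_G|}\sum_{\gamma\subseteq E_G}\kappa^{|\gamma|}\,\tau^{|\{v:\ \degv_\gamma(v)\text{ odd}\}|}.
\]

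Next, distribute each global factor over the vertices using the handshake lemma. Counting half-edges (a self-edge contributes $2$ to the degree) gives $|E_G| = \tfrac12\sum_{v}\degv_G(v)$, so $\rho^{|E_G|}=\prod_v \rho^{\degv_G(v)/2}$. The same argument on the subgraph $\gamma$ gives $\kappa^{|\gamma|}=\prod_v\kappa^{\degv_\gamma(v)/2}$. The factors $2^{|V_G|}\eta^{|V_G|}$ contribute $2\eta$ per vertex, and the $\tau$-power contributes one factor $\tau$ per vertex of odd $\gamma$-degree. Multiplying these per-vertex factors gives exactly $\lambda_v^{(G,\gamma)}$ as in \eqref{eq:comb_vtx_weights}. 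Summing over $\gamma$ yields the stated formula.

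The only point needing care is the bookkeeping for self-edges and multiple edges. A self-edge at $v$ contributes $\sigma_v^2=1$ in \eqref{eq:ZGlong} and adds $2$ to $\degv_\gamma(v)$, so it does not change the parity of the degree. This is consistent with the matching condition and with the half-degree exponents. Multiple edges are handled as distinct elements of $E_G$, so no changes are needed there.

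For "cancellation-free": since $\beta,J>0$ we have $\kappa>0$ and $\rho,\eta>0$. For $h\ge 0$ we also have $\tau\ge 0$, so every summand is nonnegative. For $h<0$, the handshake lemma applied to $\gamma$ shows that the number of odd-degree vertices is even, so the power of $\tau$ is even and every summand is again nonnegative. Half-integer powers are unproblematic because all bases are positive.
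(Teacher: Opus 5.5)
Your proposal is correct and is essentially the paper's proof, which consists only of applying Lemma~\ref{lmm:evenodd} to \eqref{eq:ZGlong}. Beyond that, you make explicit two things the paper leaves implicit: the handshake-lemma bookkeeping that distributes $(2\eta)^{|V_G|}\rho^{|E_G|}\kappa^{|\gamma|}$ and the powers of $\tau$ over the vertices (self-edges included), and the justification of ``cancellation-free'' via the even number of odd-degree vertices in $\gamma$.
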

\begin{proof}
Use Lemma~\ref{lmm:evenodd} on \eqref{eq:ZGlong}.
\end{proof}

\begin{example}
Let $G$ be the $\theta$ graph with  two vertices, $v,u$ connected by 3 edges $e_1,e_2,e_3$.
We can compute the Ising model partition function directly using \eqref{eq:ZGkappaeta} to get
\begin{align*}
    Z_G(\beta,h) & = \rho^3 \eta^2 ( (1+\kappa)^3(1+\tau)^2 + (1+\kappa)^3(1-\tau)^2 + 2(1-\kappa)^3(1+\tau)(1-\tau)) \\
    &= 4\rho^3 \eta^2  (1 + 3\kappa^2 + 3\kappa\tau^2 + \kappa^3\tau^2).
\end{align*}
Applying Proposition~\ref{prop:combinatorial_formula_partition_fct} with the following $\lambda$-values for subgraphs of the $\theta$ graph
\begin{align*}
\lambda_{(v)}^{(G,\emptyset)} &=\lambda_{(u)}^{(G,\emptyset)}= 2  \eta \rho^{3/2}, &
\lambda_{(v)}^{(G,E_G)} &=\lambda_{(u)}^{(G,E_G)}= 2 \eta (\rho\kappa)^{3/2} \tau,\\
\lambda_{(v)}^{(G,\{e_k\})} &=\lambda_{(u)}^{(G,\{e_k\})}= 2 \eta 
\rho^{3/2} \kappa^{1/2}\tau, &
\lambda_{(v)}^{(G,E_G \setminus\{e_k\})} &=\lambda_{(u)}^{(G,E_G \setminus\{e_k\})}= 2 \eta 
\rho^{3/2} \kappa,
\end{align*}
we get
\begin{align*}
    Z_G(\beta,h) 
    &= \left(
    \lambda_{(v)}^{(G,\emptyset)}\lambda_{(u)}^{(G,\emptyset)} + \lambda_{(v)}^{(G,E_G)}\lambda_{(u)}^{(G,E_G)} \right. \\
    &\left.
    +\sum_{j=1}^3 
    \lambda_{(v)}^{(G,\{e_j\})}\lambda_{(u)}^{(G,\{e_j\})} +
    \sum_{j=1}^3 
    \lambda_{(v)}^{(G,E_G\setminus\{e_j\})}\lambda_{(u)}^{(G,E_G\setminus\{e_j\})} 
     \right) \\
    &= 4 \rho^3 \eta^2 (1 \cdot 1 + (\kappa^{3/2}\tau)^2 + 3(\kappa^{1/2}\tau)^2 + 3\kappa \cdot \kappa)  \\
    &= 4 \rho^3 \eta^2 (1 + \kappa^3\tau^2 + 3\kappa\tau^2 + 3\kappa^2)\, ,
\end{align*}
which matches the previous computation.
\end{example}

\begin{remark}
  Note that, combinatorially, the free energy function is a generating function of \emph{connected graphs}, by an application of the exponential formula, see e.g.\ \cite[\S 1.4]{bergeron1998combinatorial}.
\end{remark}

\section{Thermodynamic limit and phase transitions}
\label{sec:limit_free_energy}

In this section, we study the thermodynamic limit of the free energy function $f(\beta,h)$ stated in~\eqref{eq:def_free_energy} and prove Theorem~\ref{thm:free_energy}.
We find a closed form expression for $f(\beta,h)$ which is analytic for almost all values of $\beta$ and $h$. The points of non-analyticity are the locations of phase transitions. 
We also set up the proof of Corollary~\ref{cor:phase_transitions} in Corollary~\ref{cor:smoothf} of this section.

To evaluate the thermodynamic limit~\eqref{eq:def_free_energy}, we combine Proposition~\ref{prop:combinatorial_formula_partition_fct} with a result on the asymptotic enumeration of edge-bicolored graphs in \cite{BMW_bicolored}. Recall the changes of variables from Notation~\ref{not:variable_changes}.
We define the following  polynomial, which we call the \emph{potential},
\begin{equation}
  \label{eq:potential_def}
  V_k(x,y;\beta,h) =  \sum_{\ell=0}^k \kappa^{\ell/2} \tau^{\ell \bmod 2} \frac{x^{k-\ell}y^\ell}{\ell!(k-\ell)!},
\end{equation}
where $\ell \bmod 2 = 0$ if $\ell$ is even and $\ell \bmod 2 = 1$ otherwise.

\begin{proposition}
  \label{prop:asymptotic_limit_Z_n,k}
  Let $k\geq 3$. For fixed $\beta,h$, let $\Phi({\beta,h})$ be the set of global maxima of $|V_k(x,y)|= |V_k(x,y;\beta,h)|$, where $(x,y)$ ranges over  the unit circle $(x,y) \in S^1$. We define the following Hessian determinant-like term on the circle
  \[
    B_k(x,y) = k^2 - \frac{\partial_x^2 V_k(x,y) + \partial_y^2 V_k(x,y)}{V_k(x,y)}.
  \]
  Let $\ell = \frac{k-2}{2} n$. If $B_k(x,y) \neq 0$ for all  $(x,y) \in \Phi({\beta,h})$, then for $n\rightarrow \infty$,
  \[
    Z_{n,k}(\beta,h) \sim 
    \left\{\begin{array}{ll}
      (2\eta)^{n}\rho^{\frac{nk}{2}}\frac{1}{2\sqrt{2}\pi} k^{\frac{nk+ 1}{2}} \left( \frac{2}{k-2} \right)^{\ell-\frac{1}{2}} \Gamma(\ell) \sum_{(x,y) \in \Phi(\beta,h)} \frac{V_k(x,y)^{n}}{\sqrt{B_k(x,y)}} & \text{if } nk \text{ is even}, \\
        0 & \text{otherwise}\, 
    \end{array}\right.
  \]
\end{proposition}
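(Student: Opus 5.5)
Starting from Proposition~\ref{prop:combinatorial_formula_partition_fct}, every $k$-regular $G$ with $n$ vertices has $|E_G| = nk/2$, so $\prod_v \omega_v^{(G,\gamma)} = (2\eta)^n \rho^{nk/2}\kappa^{|\gamma|}$. Hence
\[
Z_{n,k}(\beta,h) = (2\eta)^n\rho^{nk/2}\sum_{G\in\GG_{n,k}}\frac{1}{|\Aut(G)|}\sum_{\gamma\subseteq E_G}\prod_{v\in V_G}\kappa^{\degv_\gamma(v)/2}\tau^{\degv_\gamma(v)\bmod 2}.
\]
The inner double sum should be recast as a weighted count of \emph{edge-bicolored} $k$-regular graphs: color the edges in $\gamma$ with color $y$ and the rest with color $x$. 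A vertex with $\ell$ incident $y$-half-edges and $k-\ell$ incident $x$-half-edges receives weight $\kappa^{\ell/2}\tau^{\ell \bmod 2}$. Summing pairs $(G,\gamma)$ weighted by $1/|\Aut(G)|$ is the same as summing edge-bicolored graphs weighted by $1/|\Aut|$ of the colored graph, by the orbit-stabilizer argument used for half-edge-labeled graphs.

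By the standard Feynman-diagram / configuration-model formula, this count equals $(2\pi)^{-1}\int e^{-(x^2+y^2)/2}\,V_k(x,y)^n/n!\,dx\,dy$ with the potential \eqref{eq:potential_def}. More precisely, it is the coefficient extraction $n!\,[\lambda^n]$ of $\exp(\lambda V_k)$ Wick-paired with two independent Gaussians. Here the factors $1/(\ell!(k-\ell)!)$ account for half-edge labels at a vertex, and the Gaussian moments account for the pairings within each color. So I would write
\[
\sum_{G}\frac{1}{|\Aut G|}\sum_{\gamma}\cdots=\frac{1}{n!}\,\mathbb{E}\big[V_k(X,Y)^n\big],\qquad X,Y\sim\mathcal N(0,1)\ \text{i.i.d.}
\]

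Since $V_k$ is homogeneous of degree $k$, I would pass to polar coordinates $(x,y)=r(\cos\theta,\sin\theta)$. This gives
\[
\frac{1}{2\pi}\int_0^\infty r^{nk+1}e^{-r^2/2}dr\int_0^{2\pi}V_k(\cos\theta,\sin\theta)^n d\theta.
\]
The radial integral is exactly $2^{nk/2}\Gamma(nk/2+1)$, and it vanishes in the angular part by the parity symmetry $V_k(-x,-y)=(-1)^kV_k(x,y)$ when $nk$ is odd. The angular integral is then evaluated by Laplace's method. Near each global maximizer of $|V_k|$ on $S^1$, expand $\log V_k$ to second order in $\theta$ to get a contribution $V_k^n\sqrt{2\pi/(n\,|(\log V_k)''|)}$. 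Using Euler's identity $xV_x+yV_y=kV$ and its derivative, one checks that on the circle $-\partial_\theta^2 V/V = k - (V_{xx}+V_{yy})/V + k(k-1)\cdot(\ldots)$. This simplifies at a critical point to $B_k(x,y)/k$ after using $\partial_\theta V=0$ and the Laplacian identity $\partial_\theta^2 V = r^2\Delta V - k^2 V$ in polar form. So nondegeneracy is exactly $B_k\neq0$, and antipodal maximizers are both counted in $\Phi$.

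The final step is bookkeeping. Combine $\Gamma(nk/2+1)/n!$ with $n^{-1/2}$ and rewrite everything via Stirling in terms of $\Gamma(\ell)$ with $\ell=(k-2)n/2$, to match the stated constants $\frac{1}{2\sqrt2\pi}k^{(nk+1)/2}(2/(k-2))^{\ell-1/2}$. This step is only routine asymptotic matching. Alternatively, it can be taken directly from the asymptotic formula of \cite{BMW_bicolored}, whose hypotheses are exactly the global-maximum and $B_k\neq0$ conditions.

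I expect the main obstacle to be the rigorous justification of the combinatorial identity with automorphism weights, namely that the Gaussian integral reproduces $\sum 1/|\Aut|$ including self-loops and multi-edges. Next to it is the careful identification of the Laplace Hessian with $B_k$. For the former I would argue via half-edge-labeled graphs, where all weights are simply $1/(n!\prod(\ell_v!(k-\ell_v)!))$ times the number of perfect matchings.
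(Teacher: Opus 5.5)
Your route is the paper's. Proposition~\ref{prop:combinatorial_formula_partition_fct} plus orbit--stabilizer turns $Z_{n,k}$ into $(2\eta)^n\rho^{nk/2}$ times a weighted count of edge-bicolored graphs with vertex potential \eqref{eq:potential_def}. The asymptotics then come from Proposition~5.1 of \cite{BMW_bicolored}. The paper just cites that result, whereas you also sketch its proof (Gaussian integral, polar coordinates, Laplace's method), the same computation the paper does for $A_{n,k}$ in Section~\ref{sec:magnetization_combinatorial}.

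\textbf{The Hessian identification is wrong as written.} Expand
\[
\partial_\theta^2 V_k = -(xV_x+yV_y) + (y^2V_{xx} - 2xyV_{xy} + x^2V_{yy})
\]
and use both Euler identities on $S^1$. This gives $\partial_\theta^2 V_k = -kV_k + (V_{xx}+V_{yy}) - k(k-1)V_k$, that is, $-\partial_\theta^2 V_k/V_k = k^2 - (V_{xx}+V_{yy})/V_k = B_k$. So your factor $(\ldots)$ equals $1$, and the polar identity you quote says the same thing at $r=1$. This holds at every point of the circle, so $\partial_\theta V_k=0$ is not needed.

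The quantity is $B_k$, not $B_k/k$. With $B_k/k$, each Laplace contribution would carry an extra factor $\sqrt{k}$, and your final bookkeeping would not reproduce the stated prefactor. With $B_k$ it does: by Stirling,
\[
\frac{2^{nk/2}\Gamma(nk/2+1)}{2\pi\, n!}\sqrt{\frac{2\pi}{n}} \sim \frac{1}{2\sqrt2\pi}k^{\frac{nk+1}{2}}\bigl(\tfrac{2}{k-2}\bigr)^{\ell-\frac12}\Gamma(\ell).
\]
As a sanity check, at $\beta=h=0$ you have $V_k=\cos^k\theta/k!$, so $-V_k''/V_k = k = B_k(1,0)$, exactly as used in Corollary~\ref{cor:normalization}.

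A minor slip: $n!\,[\lambda^n]\exp(\lambda V_k)$ equals $V_k^n$. Drop the $n!$ so it matches your displayed $\frac{1}{n!}\mathbb{E}[V_k(X,Y)^n]$.
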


\begin{proof}
  This is a consequence of Proposition~\ref{prop:combinatorial_formula_partition_fct}, the definition \eqref{eq:random_partition_fct} of $Z_{n,k}$ and Proposition~5.1 of \cite{BMW_bicolored}.
  To apply the latter, note that a pair $(G,\gamma)$ corresponds to an edge-bicolored graph, and translating the vertex weights in \eqref{eq:comb_vtx_weights} into weights of vertices with bicolored half-edges yields the polynomial in \eqref{eq:potential_def}. To apply Proposition~\ref{prop:combinatorial_formula_partition_fct}, use the orbit-stabilizer theorem to write \eqref{eq:random_partition_fct} as a sum over
  edge-bicolored graphs weighted by the inverse cardinality of their (color preserving) automorphism groups.
\end{proof}

\begin{corollary}
\label{cor:normalization}
Let $k\geq 3$ and $\ell = \frac{k-2}{2} n$, then for $n\rightarrow \infty$,
\[
  Z_{n,k} (0,0) \sim 
  \left\{\begin{array}{ll}
    2^{n}\frac{\sqrt{k-2}}{2\pi} \left( \frac{2k}{k-2} \right)^{\ell} \Gamma(\ell) ((k-1)!)^{-n} & \text{if } nk \text{ is even}, \\
    0 & \text{otherwise}\, 
  \end{array}\right.
\]
\end{corollary}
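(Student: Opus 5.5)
The plan is to specialize Proposition~\ref{prop:asymptotic_limit_Z_n,k} to $\beta = 0$, $h = 0$. The prefactors simplify immediately: $\kappa = \tanh(0) = 0$, $\tau = 0$, $\rho = \eta = 1$.

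\emph{The potential.} With the convention $0^0 = 1$, only the $\ell = 0$ term of \eqref{eq:potential_def} survives, so
\[
V_k(x,y) = \frac{x^k}{k!}.
\]
On the unit circle, $|V_k|$ is maximized exactly at $(x,y) = (\pm 1, 0)$, where $V_k = (\pm1)^k/k!$. Hence $\Phi(0,0) = \{(1,0),(-1,0)\}$.

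\emph{The Hessian term.} Here $\partial_x^2 V_k = x^{k-2}/(k-2)!$ and $\partial_y^2 V_k = 0$. At $(\pm1,0)$ this gives
\[
\frac{\partial_x^2 V_k}{V_k} = \frac{k!}{(k-2)!} = k(k-1),
\]
so $B_k = k^2 - k(k-1) = k \neq 0$. The nondegeneracy hypothesis of the proposition therefore holds.

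\emph{The sum over maxima.} The sum $\sum_{\Phi} V_k^n/\sqrt{B_k}$ equals
\[
(1 + (-1)^{nk})\,(k!)^{-n} k^{-1/2}.
\]
When $nk$ is even this is $2 (k!)^{-n} k^{-1/2}$. When $nk$ is odd the two contributions cancel, which matches the vanishing case already covered by the proposition.

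\emph{Assembling the constants.} Substituting into Proposition~\ref{prop:asymptotic_limit_Z_n,k} gives
\[
Z_{n,k}(0,0) \sim 2^n \frac{1}{2\sqrt2\pi}\, k^{\frac{nk+1}{2}} \left(\frac{2}{k-2}\right)^{\ell-\frac12}\Gamma(\ell)\cdot 2 (k!)^{-n} k^{-1/2}.
\]
Since $\ell = \frac{k-2}{2}n$, we have $nk/2 = \ell + n$, so $k^{nk/2} = k^{\ell} k^n$. Then $k^n (k!)^{-n} = ((k-1)!)^{-n}$, and $k^\ell (2/(k-2))^\ell = (2k/(k-2))^\ell$. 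The remaining constants are
\[
\frac{2}{2\sqrt2\pi}\left(\frac{2}{k-2}\right)^{-1/2} = \frac{\sqrt{k-2}}{2\pi}.
\]
This yields the claimed formula.

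The only delicate point is the convention $\kappa^{0}\tau^{0} = 1$ at $\kappa = \tau = 0$, which is needed to identify $V_k$ correctly. I would also check that Proposition~\ref{prop:asymptotic_limit_Z_n,k} applies at the boundary value $\beta = 0$. This is consistent with the combinatorial meaning: at $\beta = h = 0$, only $\gamma = \emptyset$ contributes in Proposition~\ref{prop:combinatorial_formula_partition_fct}. Alternatively, $Z_{n,k}(0,0) = 2^n\cdot\#\{\text{weighted }k\text{-regular multigraphs}\}$ could be cross-checked against the configuration-model count $(nk)!/((nk/2)!\,2^{nk/2}(k!)^n n!)$ via Stirling's formula.
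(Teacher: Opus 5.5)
Your proposal is correct and follows the paper's proof essentially verbatim. Like the paper, you specialize Proposition~\ref{prop:asymptotic_limit_Z_n,k} to $\beta=h=0$, where $V_k=x^k/k!$, $\Phi(0,0)=\{(\pm1,0)\}$, $B_k=k\neq 0$, and the two maxima contribute equally when $nk$ is even; the constants then rearrange exactly as you show. Your configuration-model cross-check via $(nk-1)!!/\bigl((k!)^n n!\bigr)$ and Stirling's formula does reproduce the same asymptotics, so it would also give an independent, elementary proof that does not rely on the proposition.
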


\begin{proof}
For $\beta = h = 0$, we have $\rho = \eta =1$ and $\tau = \kappa =0$. So, $V_k(x,y)= \frac{x^k}{k!}$ and $B_k(x,y) = k^2 - \frac{k(k-1)}{x^2}$.
The function $|V_k(x,y)|$ has exactly two maxima on the circle $(x,y) \in S^1$, so $\Phi(0,0) = \{ (1,0), (-1,0) \}$. As $B_k(x,y) \neq 0$ for both maxima, by Proposition~\ref{prop:asymptotic_limit_Z_n,k} we get
\[
  Z_{n,k} (0,0) \sim 
  \left\{\begin{array}{ll}
    2 \cdot 
    2^{n}\frac{1}{2\sqrt{2}\pi} k^{\frac{nk+ 1}{2}} \left( \frac{2}{k-2} \right)^{\ell-\frac{1}{2}} \Gamma(\ell) \frac{(k!)^{-n}}{\sqrt{k}} & \text{if } nk \text{ is even}, \\
    0 & \text{otherwise}
  \end{array}\right.
\]
Reordering terms gives the statement.
\end{proof}

To apply Proposition~\ref{prop:asymptotic_limit_Z_n,k} with nontrivial $\beta,h$ for the proof of Theorem~\ref{thm:free_energy}, we first prove some properties of the polynomial $V_k(x,y)$ from~\eqref{eq:potential_def}.
The following lemma is a direct consequence of the positivity of the coefficients of $V_k$.
\begin{lemma}
\label{lmm:positiveVk}
If $\kappa, \tau \geq 0$, then we have $|V_k(x,y)| \leq V_k(|x|, |y|)$ for all $(x,y) \in S^1$.
Moreover, if $\kappa \geq 0$ and $\tau \leq 0$, then $|V_k(x,y)|  \leq V_k(|x|,-|y|)$ for all $(x,y) \in S^1$. \qed
\end{lemma}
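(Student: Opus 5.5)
The statement to prove is Lemma~\ref{lmm:positiveVk}, which follows from the triangle inequality applied term by term to the defining sum \eqref{eq:potential_def}.

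\emph{Case $\kappa,\tau\ge 0$.} Write $V_k(x,y)=\sum_{\ell=0}^k c_\ell\, x^{k-\ell}y^\ell$ with
\[
c_\ell=\kappa^{\ell/2}\tau^{\ell \bmod 2}/(\ell!(k-\ell)!).
\]
Every $c_\ell\ge 0$. By the triangle inequality,
\[
|V_k(x,y)|\le \sum_\ell c_\ell |x|^{k-\ell}|y|^\ell = V_k(|x|,|y|).
\]
This needs nothing about the circle $S^1$; the inequality holds for all real $(x,y)$.

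\emph{Case $\kappa\ge0$, $\tau\le 0$.} Set $\tau=-|\tau|$. The coefficient becomes
\[
c_\ell=(-1)^{\ell \bmod 2}\,\kappa^{\ell/2}|\tau|^{\ell \bmod 2}/(\ell!(k-\ell)!),
\]
so its sign is $(-1)^\ell$ and $c_\ell y^\ell = |c_\ell|(-y)^\ell$. Hence $V_k(x,y)=\tilde V(x,-y)$, where $\tilde V$ is the potential with $\tau$ replaced by $|\tau|$, which has nonnegative coefficients. Applying the first case to $\tilde V$ gives
\[
|V_k(x,y)|\le \tilde V(|x|,|y|).
\]
It remains to identify the right-hand side. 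Since $V_k(|x|,-|y|)=\tilde V(|x|,|y|)$ by the same substitution, the bound becomes $|V_k(x,y)|\le V_k(|x|,-|y|)$, as claimed.

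The only point requiring care is the sign bookkeeping in the second case. It hinges on the parity exponent $\ell \bmod 2$ matching the parity of $y^\ell$, which makes the rewriting through $\tilde V$ valid.
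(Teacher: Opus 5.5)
Your proof is correct and follows the paper's route: the paper records the lemma as a direct consequence of the nonnegativity of the coefficients of $V_k$, which is exactly your termwise triangle inequality. Your reduction of the $\tau\le 0$ case via $V_k(x,y)=\tilde V(x,-y)$ supplies the sign bookkeeping the paper leaves implicit, and it is correct.
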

The polynomial $V_k(x,y)$ decomposes into a linear combination of two $k$-th powers,
\[
  V_k(x,y) = \frac{1+\tau}{2\cdot k!} (x + \sqrt{\kappa} y)^k + \frac{1-\tau}{2\cdot k!} (x - \sqrt{\kappa}y)^k\,,
\]
and parameterizing the circle as $x = \cos\theta$ and $y = \sin\theta$ and letting $\phi = \arctan(\sqrt{\kappa})$ gives
\begin{align*}
  V_k(\theta) = V_k(\cos(\theta),\sin(\theta)) = \frac{(1+\kappa)^{k/2}}{2\cdot k!}\left( (1+\tau)\cos^k(\theta - \phi) + (1-\tau)\cos^k(\theta + \phi) \right).
\end{align*}

To apply Proposition~\ref{prop:asymptotic_limit_Z_n,k}, we need to maximize $|V_k(\theta)|$ over the domain $\theta \in [-\pi,\pi]$.
By Lemma~\ref{lmm:positiveVk}, we may restrict the search for the global maximum to $\theta \in [0,\pi/2]$ if $h \geq 0$ and $\theta \in [-\pi/2,0]$ if $h \leq 0$.
Moreover, as $\beta > 0$, we also have $\kappa \in (0,1)$ and therefore $\phi=\arctan(\sqrt{\kappa}) \in (0,\pi/4)$.
We can write the critical equation, $V_k'(\theta^\star) = 0$, as
\begin{align}
  \label{eq:crit}
  g(\theta^\star) = \frac{1-\tau}{1+\tau}\, ,
\end{align}
where we defined the function $g:[0,\theta_s) \cup (\theta_s,\pi/2] \rightarrow \RR$ via
\begin{align}
  \label{eq:gdef}
  g(\theta) = \frac{\sin(\phi-\theta)}{\sin(\phi+\theta)} \left( \frac{\cos(\phi-\theta)}{\cos(\phi+\theta)} \right)^{k-1}\,,
\end{align}
which has a $(k-1)$-th order pole at $\theta= \theta_s = \pi/2-\phi \in (\pi/4,\pi/2)$.

\begin{lemma}\mbox{}
  \label{lem:maximum}
  \begin{enumerate}
    \item If $h > 0$, then $|V_k(\theta)|$ has a unique global maximum in the interval $[0,\pi/2]$ at $\theta= \theta^\star$.
    \item If $h < 0$, then $|V_k(\theta)|$ has a unique global maximum in the interval $[-\pi/2,0]$ at $\theta=-\theta^\star$.
  \end{enumerate}

  Furthermore, let $\beta_c^{\mathrm{tree}} = \frac{1}{2J}\log(\frac{k}{k-2})$.  If $h =0$ and
  \begin{enumerate}
    \item[3.] $\beta > \beta_c^{\mathrm{tree}}$, then $|V_k(\theta)|$ has a unique global maximum in the interval $[0,\pi/2]$ at $\theta = \theta^\star$.
    \item[4.] $\beta < \beta_c^{\mathrm{tree}}$, then $|V_k(\theta)|$ has a unique global maximum in the interval $[0,\pi/2]$ at $\theta = 0$.
  \end{enumerate}

  Here, $\theta^\star$ is the unique solution of 
  \begin{align}
    g(\theta^\star) = \frac{1-|\tau|}{1+|\tau|}\, ,
  \end{align}
  in the interval $(0,\pi/2-\phi)$ with $g$ defined in eq.~\eqref{eq:gdef}.
  (Recall that $\tau$ is a function of $h$ by Notation~\ref{not:variable_changes}.)
\end{lemma}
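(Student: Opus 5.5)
By the symmetry $\theta\mapsto-\theta$, $\tau\mapsto-\tau$ of $V_k(\theta)$ (which swaps the two $\cos^k$ terms), case 2 follows from case 1. We therefore assume $h\ge 0$, so $\tau\in[0,1)$. By Lemma~\ref{lmm:positiveVk} it suffices to study $V_k(\theta)$ on $[0,\pi/2]$. The plan has four steps: localize the maximum, analyse $g$, classify critical points, and treat $h=0$.

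\textbf{Step 1 (localize the maximum).} For $\theta\in[\theta_s,\pi/2]$ we have $\theta+\phi\ge\pi/2$, so $\cos(\theta+\phi)\le 0$ while $\cos(\theta-\phi)>0$. Hence
\[
|V_k(\theta)|\le \frac{(1+\kappa)^{k/2}}{2\, k!}\max\bigl((1+\tau)\cos^k(\theta-\phi),\,(1-\tau)|\cos(\theta+\phi)|^k\bigr).
\]
Each of these two terms is at most its value at $\theta_s$ or is dominated by $V_k(\theta_s-\epsilon)$. Concretely, on $[\theta_s,\pi/2]$ the quantity $\cos(\theta-\phi)$ decreases and $|\cos(\theta+\phi)|\le\sin\phi<\cos(\theta-\phi)$. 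A direct comparison then shows that $V_k$ is not maximal there, and the same comparison gives $V_k(\theta)>0$ on $[0,\theta_s]$. So the maximum lies in $[0,\theta_s)$, where $V_k>0$, and the maximizer is a critical point or the endpoint $0$.

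\textbf{Step 2 (monotonicity of $g$).} On $(0,\theta_s)$, write $\log g$ (where positive) as
\[
\log\frac{\sin(\phi-\theta)}{\sin(\phi+\theta)}+(k-1)\log\frac{\cos(\phi-\theta)}{\cos(\phi+\theta)}.
\]
The first term is decreasing and tends to $-\infty$ at $\theta=\phi$. The second is increasing and tends to $+\infty$ at $\theta_s$. For $\theta\in(\phi,\theta_s)$ we have $g<0$, so no solution of~\eqref{eq:crit} lies there when $\tau<1$. The solutions therefore lie in $[0,\phi)$, where $g(0)=1$ and $g(\phi)=0$.

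Differentiating gives
\[
(\log g)'=-\frac{\sin 2\phi}{\sin(\phi-\theta)\sin(\phi+\theta)}+(k-1)\frac{\sin2\phi}{\cos(\phi-\theta)\cos(\phi+\theta)}.
\]
Its sign is that of $(k-1)\sin(\phi-\theta)\sin(\phi+\theta)-\cos(\phi-\theta)\cos(\phi+\theta)$. Using product-to-sum formulas this equals
\[
\tfrac12\bigl[(k-1)(\cos2\theta-\cos2\phi)-(\cos2\theta+\cos2\phi)\bigr]=\tfrac12\bigl[(k-2)\cos2\theta-k\cos2\phi\bigr].
\]
This expression is strictly decreasing in $\theta$ on $[0,\phi)$. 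Hence $g$ is either strictly decreasing on $[0,\phi)$ or first increasing and then decreasing. The first case occurs iff $(k-2)\le k\cos2\phi=k\frac{1-\kappa}{1+\kappa}$, i.e.\ iff $\kappa\le\frac1{k-1}$, which is exactly $\beta\le\beta_c^{\mathrm{tree}}$.

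\textbf{Step 3 (unique maximum for $h>0$).} Here $\frac{1-\tau}{1+\tau}\in(0,1)$. If $g$ is strictly decreasing, it takes this value exactly once, at some $\theta^\star\in(0,\phi)$. If $g$ is unimodal with $g(0)=1$, it exceeds $1$ near $0$ and then decreases to $0$, so it again crosses each value in $(0,1)$ exactly once. Thus $V_k$ has a single critical point in $(0,\theta_s)$. Because $V_k'(0)\propto\tau\sin\phi\,(\dots)>0$, this critical point is the maximum.

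\textbf{Step 4 ($h=0$).} The target value is $1=g(0)$, and $\theta=0$ is always critical. For $\beta<\beta_c^{\mathrm{tree}}$, $g<1$ on $(0,\phi)$, so $0$ is the only critical point and hence the maximum. For $\beta>\beta_c^{\mathrm{tree}}$, $g$ rises above $1$ and then falls to $0$, so there is exactly one $\theta^\star\in(0,\phi)$ with $g=1$. To see that this is the maximum rather than $\theta=0$, note that $V_k'$ has a fixed sign between consecutive critical points. The sign of $V_k'$ is tied to the sign of $g(\theta)-\frac{1-\tau}{1+\tau}$ through the factorization
\[
V_k'\propto (1+\tau)\cos^{k-1}(\theta+\phi)\sin(\theta+\phi)\bigl(g-\tfrac{1-\tau}{1+\tau}\bigr).
\]
So $V_k$ increases on $(0,\theta^\star)$ and decreases afterwards, and $\theta=0$ is a local minimum.

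The main obstacle is Step 1, excluding $[\theta_s,\pi/2]$ cleanly. Failing the comparison above, I would instead bound $|V_k(\theta)|\le V_k(\theta_s')$ directly by the triangle inequality combined with the monotonicity of $\cos(\theta-\phi)$.
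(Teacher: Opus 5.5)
\textbf{The gap is Step~1 for even $k$.} Your Steps 2--4 are correct and follow the paper's analysis of $g$ closely: the same sign factor $(k-2)\cos 2\theta-k\cos 2\phi$ and the same threshold $\kappa=\frac{1}{k-1}$. Reading off the sign of $V_k'$ from the factorization is a tidy replacement for the paper's second-derivative test at $\theta=0$. But Steps 3 and 4 only locate critical points in $[0,\theta_s)$, so the whole argument rests on Step~1.

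For odd $k$ your bound is valid and can be completed. With $C=\frac{(1+\kappa)^{k/2}}{2\,k!}$, both terms are at most $C(1+\tau)\sin^k 2\phi=V_k(\theta_s)$, and $V_k'(\theta_s)<0$.

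For even $k$, however, $\cos^k(\theta+\phi)\ge 0$ on $[\theta_s,\pi/2]$. So $|V_k(\theta)|$ is the \emph{sum} of the two terms, and your max-bound is false. The sentence ``a direct comparison then shows'' then carries no argument.

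Your fallback does not close the gap either. Bounding the two terms separately by monotonicity fails, because $|\cos(\theta+\phi)|$ increases on $[\theta_s,\pi/2]$. Indeed, take $\tau=0$ and $\phi$ close to $\pi/4$. The sum of the two suprema, $C\left((1+\tau)\sin^k 2\phi+(1-\tau)\sin^k\phi\right)$, tends to $C(1+2^{-k/2})$. Meanwhile $\max_{[0,\theta_s)}V_k\le C(1+\cos 2\phi)$ tends to $C$.

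This is not a formality. For even $k$ and $h=0$ one has $g(\pi/2)=(-1)^k=1$, so $\theta=\pi/2$ is a genuine critical point of $V_k$ and must be excluded by comparing values.

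\textbf{How to fix it.} You need a point of $[0,\theta_s)$ that dominates both terms at once. The reflection across $\theta_s$ does this for every $k$. For $\theta\in(\theta_s,\pi/2]$, put $\theta'=\pi-2\phi-\theta\in[\pi/2-2\phi,\theta_s)$. Then
\[ \cos(\theta'+\phi)=-\cos(\theta+\phi)=|\cos(\theta+\phi)|. \]
Since $\phi<\pi/4$, one checks $|\theta'-\phi|<\theta-\phi<\pi/2$, so $\cos(\theta'-\phi)>\cos(\theta-\phi)>0$. The triangle inequality then gives $|V_k(\theta)|<V_k(\theta')$.

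Also $V_k'(\theta_s)=Ck(1+\tau)\sin^{k-1}(2\phi)\sin(\phi-\theta_s)<0$, because $\sin(\phi-\theta_s)=-\cos 2\phi$. Together these show that every maximizer on $[0,\pi/2]$ lies in $[0,\theta_s)$, and your Steps 3--4 then finish the proof.

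\textbf{Comparison with the paper.} The paper avoids value comparisons on $(\theta_s,\pi/2]$. It shows $g'\neq 0$ and $|g|>1$ there, so $V_k$ has no critical points in $(\theta_s,\pi/2)$. It then uses that a global maximum of $|V_k|$ on $S^1$ is a critical point, which by Lemma~\ref{lmm:positiveVk} may be taken in $[0,\pi/2]$. Finally, it compares $V_k(0)$ with $|V_k(\pi/2)|$ to handle the endpoint $\theta=\pi/2$.
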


\begin{proof}
  The derivative of $g$ fulfills
  \begin{align*}
    g'(\theta) = \sin(2\phi)\frac{\cos(\phi-\theta)^{k-2}}{2\cos(\phi+\theta)^{k}\sin^2(\phi+\theta)} \bigl( (k-2) \cos(2\theta) - k \cos(2\phi) \bigr)\,,
  \end{align*}
  and the equation
  \begin{equation}
    \label{eq:theta_0}
    \cos(2\theta_0) = \frac{k}{k-2}\cos(2\phi)\, ,
  \end{equation}
  has a unique solution $\theta_0 \in [0,\pi/2]$ if and only if 
  $\cos(2\phi) \leq \frac{k-2}{k}$. If such a unique $\theta_0$ exists, then $g'(\theta_0)=0$. This condition is equivalent to $\kappa \geq \frac{1}{k-1}$ as $\cos(2\phi) = \frac{1-\kappa}{1+\kappa}$ 
  and also to $\beta \geq \frac{1}{J} \tanh^{-1}\left( \frac{1}{k-1} \right) = \beta_c^\mathrm{tree}$.

  If there is such a $\theta_0$, then $0<\theta_0 < \pi/4 < \theta_s$
  and $g$ is monotonically increasing in the interval $[0,\theta_0]$ and
  monotonically decreasing in the interval $[\theta_0,\theta_s)$. 
  On the other hand, we find that  if $\beta < \beta_c^\mathrm{tree}$, then
  $g$ is monotonically decreasing on $[0,\theta_s)$.

  On $(\theta_s,\pi/2]$, the derivative of $g$ is non-vanishing. Because $g$ has a pole at $\theta_s$, 
  its magnitude $|g(\theta)|$ must be decreasing on $(\theta_s,\pi/2]$.
  Because $|g(\pi/2)|=1$, it follows that $|g(\theta)| >1$ for $\theta \in (\theta_s,\pi/2)$.
  Furthermore, note that $g(0) =1$ and $\lim_{\theta \rightarrow \theta_s^-} g(\theta) = -\infty$.

  If $h > 0$, and therefore $0<\frac{1-\tau}{1+\tau} < 1$, \eqref{eq:crit} has a unique solution at $\theta^\star \in (0,\theta_s)$,
  as either $g$ is monotonically decreasing in that interval or $g$ is increasing from $g(0)=1$ to some maximal value $g(\theta_0) >1$ before it 
  decreases to $-\infty$ in the interval $[\theta_0,\theta_s)$. The case $h<0$ follows by the symmetry $(\tau,\theta) \leftrightarrow (-\tau,-\theta)$ of the potential $V_k$.
  There are no other critical points in the interval $[0,\pi/2]$. So, by Lemma~\ref{lmm:positiveVk}, the critical point of $V_k$ at $\theta^\star$ is a global maximum.

  This proves the $h \neq 0$ case of the statement.

  If $h=0$, \eqref{eq:crit} reduces to $g(\theta^\star) = 1$. 
  If, moreover, $\beta < \beta_c^\mathrm{tree}$, then \eqref{eq:crit} 
  has one or two solutions on the interval $[0,\pi/2]$ as $g(0) = 1$ and $g(\pi/2) = (-1)^k$.
  The discussion above implies that $g(\theta) \neq 1$ for all $\theta \in (0,\pi/2)$.
  It is easy to verify that $|V_k(0)| > |V_k(\pi/2)|$ as $\phi \in (0,\pi/4)$. Hence, only $\theta = 0$ is a
  global maximum of $V_k$ by Lemma~\ref{lmm:positiveVk}.

  If $\beta > \beta_c^\mathrm{tree}$, there is an additional candidate 
  for a global maximum at $\theta^\star \in (\theta_0,\theta_s)$, where \eqref{eq:crit} has a nontrivial solution.
  Computing the second derivative of $V_k$ and evaluating at $\theta=0$ gives
  \begin{equation}
      \label{eq:2nd_der_at_0}
      \frac{\partial^2}{\partial \theta^2} V_k(0) = \frac{2(1+\kappa)^{k/2}}{2 k!} k \cos^k(\phi) \left( (k-1) \tan^2(\phi) - 1 \right).
  \end{equation}
  Since $\beta > \beta_c^{\text{tree}} = \frac{1}{J}\mathrm{arctanh} \big(\tfrac{1}{k-1}\big)$, we have $\tan^2(\phi) = \kappa > \frac{1}{k-1}$. Moreover, $\cos^k(\phi) > 0$ as $0<\phi<\pi/4$, hence \eqref{eq:2nd_der_at_0} is positive and $\theta=0$ is a local minimum of $V_k$. Since $\theta=0$ and $\theta=\theta^\star$ are the only critical points, $\theta^\star$ is a local and hence global maximum.
\end{proof}

\begin{lemma}
  \label{lem:nondegenerate}
  In any of the four cases of Lemma~\ref{lem:maximum}, the maximum is nondegenerate.
\end{lemma}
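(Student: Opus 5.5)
We will show that $V_k''$ does not vanish at the maximizer. This is what is needed to apply Proposition~\ref{prop:asymptotic_limit_Z_n,k}, because $B_k$ can be written in terms of the angular derivative. Since $V_k(x,y)$ is homogeneous of degree $k$, the Laplacian in polar coordinates at $r=1$ is
\[
\partial_x^2V_k+\partial_y^2V_k=\partial_r^2V_k+\partial_rV_k+\partial_\theta^2V_k=k(k-1)V_k+kV_k+V_k''(\theta)=k^2V_k+V_k''(\theta).
\]
Hence $B_k(\cos\theta,\sin\theta)=-V_k''(\theta)/V_k(\theta)$. Nondegeneracy of a maximum of $|V_k|$, where $V_k\neq0$, is therefore equivalent to $V_k''(\theta)\neq 0$, and this is equivalent to $B_k\neq0$. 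The antipodal maximum $\theta+\pi$ satisfies $V_k(\theta+\pi)=(-1)^kV_k(\theta)$, so it is covered by the same argument. The symmetry $(\tau,\theta)\leftrightarrow(-\tau,-\theta)$ reduces case~2 to case~1.

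The key step is to factor $V_k'$ through the function $g$ of \eqref{eq:gdef}. Write $C=\frac{(1+\kappa)^{k/2}}{2\cdot k!}$ and differentiate $V_k(\theta)$. Factoring out $(1+\tau)\cos^{k-1}(\theta+\phi)\sin(\theta+\phi)$ gives
\[
V_k'(\theta)=A(\theta)\Bigl(g(\theta)-\tfrac{1-\tau}{1+\tau}\Bigr),\qquad A(\theta)=kC(1+\tau)\cos^{k-1}(\theta+\phi)\sin(\theta+\phi).
\]
This identity is exactly how \eqref{eq:crit} arose, so it only needs to be checked, not discovered. At any critical point $\theta^\star$ the bracket vanishes, so $V_k''(\theta^\star)=A(\theta^\star)\,g'(\theta^\star)$. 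For $\theta^\star\in(0,\theta_s)$ we have $\theta^\star+\phi\in(\phi,\pi/2)$, and $\tau\in(-1,1)$. Hence $A(\theta^\star)\neq0$, and it suffices to show $g'(\theta^\star)\neq0$.

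From the formula for $g'$ in the proof of Lemma~\ref{lem:maximum}, $g'$ vanishes on $[0,\theta_s)$ only at $\theta_0$, the solution of \eqref{eq:theta_0}. This solution exists only when $\beta\ge\beta_c^{\rm tree}$, and at $\beta=\beta_c^{\rm tree}$ it is $\theta_0=0$.

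In case~1, suppose $\theta^\star=\theta_0>0$. Then $g(\theta^\star)=g(\theta_0)>g(0)=1>\frac{1-\tau}{1+\tau}$, which contradicts the critical equation. If $\theta_0=0$ or $\theta_0$ does not exist, then $\theta^\star>0$ is automatically different from $\theta_0$.

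In case~3, Lemma~\ref{lem:maximum} places $\theta^\star$ in $(\theta_0,\theta_s)$, where $g$ is strictly decreasing and $g'\neq 0$, so $\theta^\star\neq\theta_0$.

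In case~4, the maximum is at $\theta=0$, and we use \eqref{eq:2nd_der_at_0} directly. Since $\beta<\beta_c^{\rm tree}$, we have $\tan^2\phi=\kappa<\frac1{k-1}$, so $(k-1)\tan^2\phi-1<0$, while $\cos^k\phi>0$. Thus $V_k''(0)\neq0$.

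The main point requiring care is the factorization identity and the sign and nonvanishing of $A$. The rest reduces to the monotonicity analysis of $g$ already carried out in Lemma~\ref{lem:maximum}.
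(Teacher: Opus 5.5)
Your proof is correct and follows essentially the same route as the paper: the paper computes $V_k''(\theta^\star)$ explicitly and, after using \eqref{eq:crit}, reduces to the nonvanishing of \eqref{eq:tan_neq_0}. That expression is exactly $(\log g)'(\theta^\star)$, so your factorization $V_k' = A\,\bigl(g - \frac{1-\tau}{1+\tau}\bigr)$ is a cleaner way to reach the same condition $\theta^\star \neq \theta_0$, and case 4 is handled identically via \eqref{eq:2nd_der_at_0}.

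Your polar-Laplacian identity $B_k = -V_k''/V_k$ on the circle is a nice addition. It makes explicit the link between nondegeneracy and the hypothesis $B_k \neq 0$ of Proposition~\ref{prop:asymptotic_limit_Z_n,k}, which the paper leaves implicit.
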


\begin{proof}
  We first show that $\frac{\partial^2}{\partial \theta^2} V_k(\theta^\star) \neq 0$. Computing the derivative gives 
  \begin{align*}
    \frac{\partial^2}{\partial \theta^2} V_k(\theta) & = \frac{(1+\kappa)^{k/2}}{2 k!} \left( (1+\tau) k \cos^k(\phi - \theta)\left( (k-1)\tan^2(\phi - \theta) -1 \right) \right. \\ 
    & - \left. (1-\tau) k \cos^{k-2}(\phi + \theta) \left( \cos^2(\phi+\theta) - (k-1)\sin^2(\phi+\theta) \right) \right)
  \end{align*}
  Evaluating at $\theta = \theta^\star$, using \eqref{eq:crit}, and $\sin(\phi+\theta^\star) \neq 0 \neq \cos(\phi+\theta^\star)$, it remains to show
  \begin{equation}
    \label{eq:tan_neq_0}
    (k-1) \left( \tan(\phi-\theta^\star) + \tan(\phi+\theta^\star) \right) - \cot(\phi-\theta^\star) - \cot(\phi+\theta^\star) \neq 0.
  \end{equation}
  The left-hand side of \eqref{eq:tan_neq_0} is zero if and only if 
  \begin{enumerate}
    \item $\tan(\phi-\theta^\star) + \tan(\phi+\theta^\star) = 0$; or
    \item $\tan(\phi-\theta^\star) \tan(\phi+\theta^\star) = \frac{1}{k-1}$.
  \end{enumerate}
  The first case is equivalent to $\sin(2\phi) = 0$, which cannot occur since $\kappa > 0$. The second case is equivalent to $\cos(2\theta^\star) = \frac{k}{k-2} \cos(2\phi)$. This is precisely \eqref{eq:theta_0} from above, with unique solution $\theta_0$. But the proof of Lemma~\ref{lem:maximum} shows $\theta^\star\neq \theta_0$.\par

  Finally, consider case 4 of Lemma~\ref{lem:maximum}, i.e.\ $\tau = 0$, $\beta < \beta_c^{\text{tree}}$ and the global maximum is at $\theta=0$. The second derivative of $V_k$ at $\theta=0$ has already been computed in \eqref{eq:2nd_der_at_0}.
  Since $\phi \in (0,\pi/4)$, we have $\cos(\phi)\neq 0$. Moreover, $\tan^2(\phi) = \kappa < \frac{1}{k-1}$, since $\beta < \beta_c^{\text{tree}} = \frac{1}{J}\mathrm{arctanh} \big(\tfrac{1}{k-1}\big)$. Therefore, \eqref{eq:2nd_der_at_0} does not vanish.
\end{proof}

The values $(\beta,h)$ for which some saddle points of $V_k$ become degenerate form a \emph{discriminant} locus. Lemma~\ref{lem:nondegenerate} above shows that the discriminant is tightly connected to the phase transitions. Also see \cite[Remark 3.5]{LeeYang} for a discussion on the role of the discriminant in determining phase transitions and partition function zeros. 

\begin{proof}[Proof of Theorem~\ref{thm:free_energy}]
As a consequence of Lemma~\ref{lem:maximum} and \ref{lem:nondegenerate}, we may apply Proposition~\ref{prop:asymptotic_limit_Z_n,k} to
compute the large-$n$ behavior of $Z_{n,k}(\beta,h)$.
There may be multiple global maxima of $|V_k|$, but identifying one of them is sufficient, as their multiplicity only contributes an overall prefactor that does not contribute in the $n\rightarrow \infty$ limit after taking logarithms and normalizing by $n^{-1}$.
Computing the limit~\eqref{eq:def_free_energy} using Corollary~\ref{cor:normalization} gives the stated formula for $f(\beta,h)$.
\end{proof}

We conclude this section by proving Corollary~\ref{cor:phase_transitions}.

\begin{corollary}
  \label{cor:smoothf}
    The function $f(\beta,h)$ is analytic for all $(h,\beta)$ where $\beta > 0$
    with the exception of the ray $(0,\beta)$ where $\beta \geq \beta_c^{\rm tree}$.
\end{corollary}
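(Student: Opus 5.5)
The plan is to show that the maximizer $\theta^\star(\beta,h)$ depends analytically on $(\beta,h)$ away from the exceptional ray. Analyticity of $f$ then follows from formula \eqref{eq:free_energy_theorem} of Theorem~\ref{thm:free_energy}, since $\log\cosh(\beta h)$, $\frac{k}{2}\log\cosh(\beta J)$ and $1/\beta$ are analytic for $\beta>0$. The coefficients $\kappa=\tanh(\beta J)$, $\tau=\tanh(\beta h)$ and $\phi=\arctan\sqrt{\kappa}$ are real-analytic in $(\beta,h)$ for $\beta>0$, because $\kappa\in(0,1)$ keeps $\sqrt{\kappa}$ away from its branch point. Hence $V_k(\theta;\beta,h)$ is jointly real-analytic in $(\theta,\beta,h)$. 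It then suffices to show that $(\beta,h)\mapsto \theta^\star(\beta,h)$ is analytic and that $V_k(\theta^\star)$ does not vanish. The latter holds because $\theta^\star$ is a global maximum of $|V_k|$ and $V_k(0)>0$, so $\log V_k(\theta^\star)$ is analytic.

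\textbf{Step 1: the region $h\neq 0$.} Take $h>0$; the case $h<0$ follows from the symmetry $(\tau,\theta)\leftrightarrow(-\tau,-\theta)$. By Lemma~\ref{lem:maximum}, $\theta^\star$ is the unique global maximum in $(0,\pi/2-\phi)$, and it solves $F(\theta,\beta,h):=\partial_\theta V_k(\theta;\beta,h)=0$. By Lemma~\ref{lem:nondegenerate}, $\partial_\theta F(\theta^\star,\beta,h)=\partial_\theta^2V_k(\theta^\star)\neq0$. The analytic implicit function theorem therefore gives a local analytic branch $\tilde\theta(\beta,h)$ of critical points through $\theta^\star(\beta_0,h_0)$. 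This branch must coincide with $\theta^\star$ nearby. Indeed, the global maximum is unique, and the maximum value over the compact circle varies continuously; a nondegenerate unique maximizer therefore moves continuously. By uniqueness of critical points in the interval, also from Lemma~\ref{lem:maximum}, the maximizer is exactly the branch $\tilde\theta$.

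\textbf{Step 2: $h=0$ with $\beta<\beta_c^{\rm tree}$, and the cases with $h$ crossing $0$.} Here the maximizer is $\theta=0$, which is nondegenerate by Lemma~\ref{lem:nondegenerate}. For $(\beta,h)$ near $(\beta_0,0)$, the implicit function theorem applied to $F=0$ at $(0,\beta_0,0)$ yields an analytic branch $\tilde\theta(\beta,h)$ with $\tilde\theta(\beta,0)=0$. For small $h\neq0$ this branch is the maximizer $\pm\theta^\star$ of cases 1 and 2, again by the continuity and uniqueness argument of Step 1. Thus a single analytic function describes the maximizer on a full neighborhood that crosses $h=0$, and $f$ is analytic there. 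At points $(\beta,0)$ with $\beta>\beta_c^{\rm tree}$, the maximizers for $h>0$ and $h<0$ converge to the two distinct points $\pm\theta^\star$. This explains why the ray is excluded, although for this corollary we only need analyticity off the ray.

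\textbf{Main obstacle.} The delicate step is justifying that the analytic critical-point branch really is the global maximizer in a whole neighborhood, especially near $h=0$ in Step 2. There one must rule out a competing critical point, such as $\theta$ near $\pi/2$ or the branch near $\theta_s$, becoming a larger maximum. I would handle this by a compactness argument. On the compact set of $\theta$ values at positive distance from the nondegenerate maximizer, $|V_k|$ is strictly smaller at the base parameter, and it stays strictly smaller nearby by uniform continuity. Near the maximizer, nondegeneracy gives a unique critical point. Together these two facts pin the global maximizer to the analytic branch.
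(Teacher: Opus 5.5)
Your proposal is correct and takes essentially the same route as the paper. Both reduce analyticity of $f$ to analyticity of the maximizer $\theta^\star$, which follows from uniqueness for $h\neq 0$, and for $h=0$, $\beta<\beta_c^{\rm tree}$ from the maximizer at $\theta=0$ continuing analytically across $h=0$. For that last step the paper argues via the strict monotonicity of $g$ near $\theta=0$, whereas you use the analytic implicit function theorem with Lemma~\ref{lem:nondegenerate} plus a compactness argument (taken modulo the antipodal symmetry $\theta\mapsto\theta+\pi$); this is a more careful rendering of the same idea.
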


\begin{proof}
  By Theorem~\ref{thm:free_energy}, the only source of non-analyticity of the free energy $f(\beta,h)$ is 
  the function $\theta^\star(\beta,h)$ that can behave non-analytically when the location of the global maximum of $|V_k|$ changes.
  It follows from Lemma~\ref{lem:maximum} that the value of $\theta^\star(\beta,h)$ is 
  determined by a unique maximum if $|h| > 0$. The location of this maximum can only change 
  analytically with $\beta,h$ and it is therefore sufficient to prove that $\theta^\star(\beta,h)$ is an analytic function whenever $0<\beta < \beta_c^\mathrm{tree}$.

  Recall that the location of critical points of $|V_k|$ is governed by equation~\eqref{eq:crit}.
  As explained in the proof of Lemma~\ref{lem:maximum}, the function $g(\theta)$ is monotonically decreasing in the interval $[0,\theta_s]$ if $\beta < \beta_c^\mathrm{tree}$.
  For small $h >0$, the location of the global maximum, as guaranteed by Lemma~\ref{lem:maximum}, approaches $\theta^\star \to 0$ smoothly. For $h < 0$, there is no more critical point with small $\theta^\star >0$.
  Hence, the function $\theta^\star(\beta,h)$ that tracks the global maximum 
  must be analytic in $\beta$ and $h$.
\end{proof}

\section{Magnetization and critical exponents}
\label{sec:crit_exponents}

In this section, we prove Corollary~\ref{cor:critical_exponents}. The critical exponents are obtained from our free energy formula in Theorem~\ref{thm:free_energy} by computing low-order Taylor expansions. We explain this in detail for the magnetization and the critical exponent $\beta^{\rm mag}$ in Subsection~\ref{subsec:crit_behavior}, and briefly present the analogous computations for the other critical exponents. 

\begin{figure}
  \centering
  \includegraphics[width=0.48\linewidth]{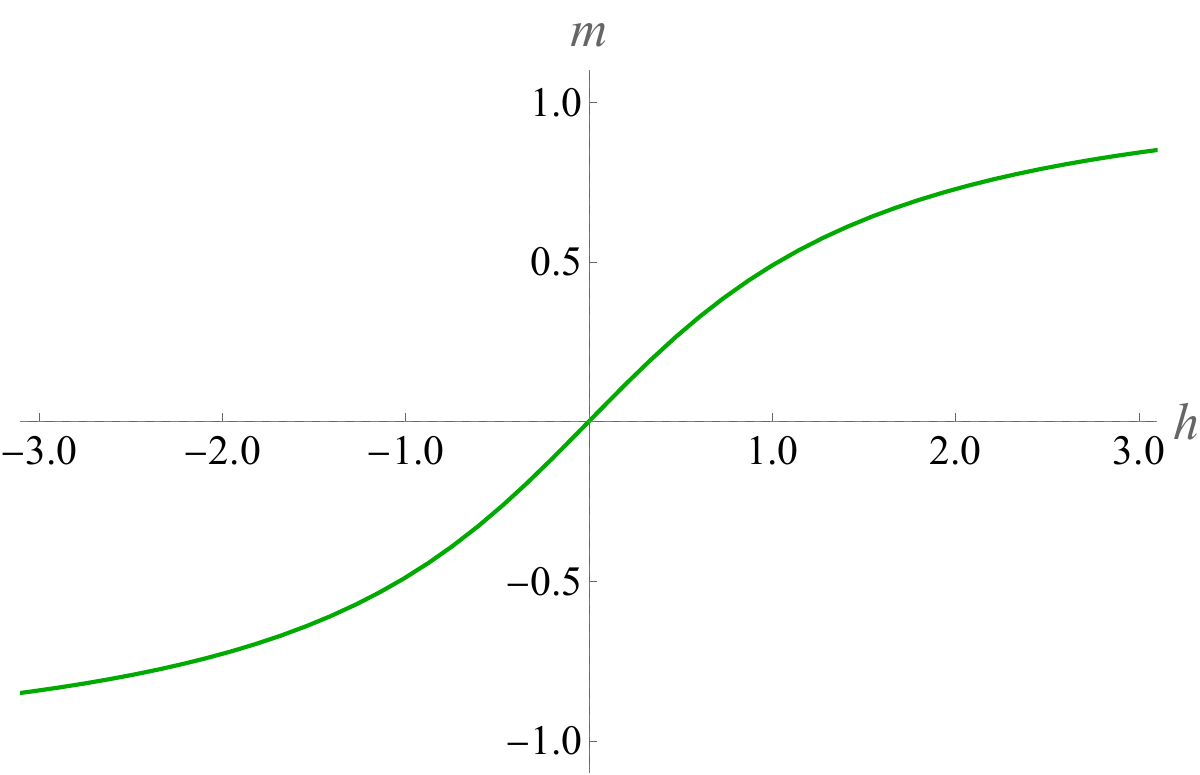}
  \includegraphics[width=0.48\linewidth]{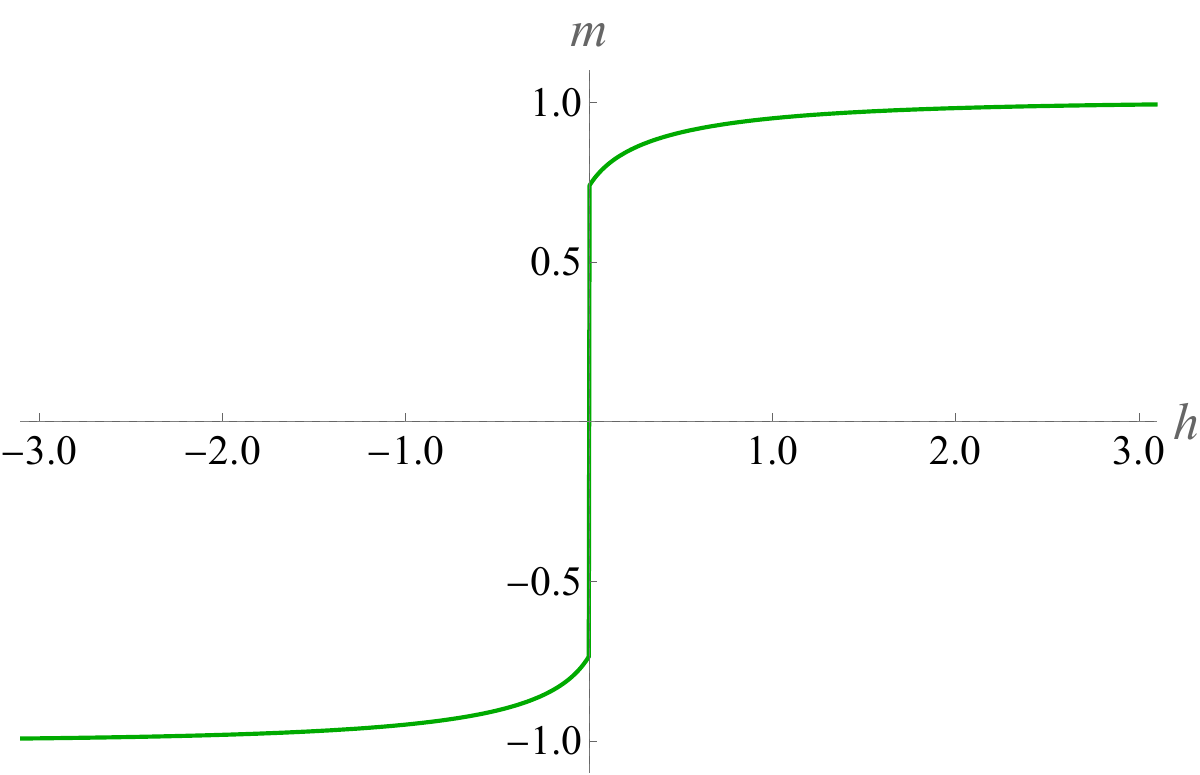}
  \caption{Magnetization $m$ versus external field $h$ for $k=4$, $J=1$ and a fixed $\beta = 0.2<\beta_c^{\rm tree} = \frac12\log2 = 0.346\dots$ (left) and $\beta = 0.4>\beta_c^{\rm tree}$ (right).}
  \label{fig:m0_vs_h}
\end{figure}

\subsection{Spontaneous magnetization}

The spontaneous magnetization is derived from the free energy \eqref{eq:free_energy_theorem} via the thermodynamic relation~\eqref{eq:def_mag}, which, after a change of coordinates, is equivalent to
\begin{equation}
  \label{eq:magnetization}
  m(\beta,h) = \beta(1-\tau^2) \frac{\partial f}{\partial \tau} = \tau + (1-\tau^2) \frac{\partial}{\partial\tau} \log V_k(\theta^\star(\kappa,\tau),\tau).
\end{equation}
Because $\theta^\star$ extremizes the potential $V_k$, the derivative $\frac{\partial V_k}{\partial \theta} \frac{\partial \theta^\star}{\partial \tau} = 0$ vanishes. Consequently, 
\[
  \frac{\partial}{\partial \tau} \log V_k(\theta^\star,\tau) = \left. \frac{\partial \log V_k(\theta,\tau)}{\partial \tau} \right|_{\theta=\theta^\star}.
\]
Evaluating this in the zero-field limit ($h \searrow 0$) yields the spontaneous magnetization
\begin{equation}
  \label{eq:M_zero_field} 
  m_0(\beta) = \left.\frac{\cos^k(\theta^\star - \phi) - \cos^k(\theta^\star + \phi)}{\cos^k(\theta^\star - \phi) + \cos^k(\theta^\star + \phi)} \right|_{h =0}\,,  
\end{equation}
where we recall that $\theta^\star$ is a function of $\kappa$ and $\tau$ and therefore 
of $\beta$ and $h$ by Notation~\ref{not:variable_changes}.
For temperatures above the phase transition point ($\kappa<\frac{1}{k-1}$), we are in case $4$ of Theorem~\ref{thm:free_energy}: $\theta=0$ is the unique global maximum of $|V_k(\theta)|$, and therefore $\theta^\star = 0$ and $m_0=0$.

For temperatures below the phase transition point ($\kappa > \kappa_c = \frac{1}{k-1}$), the global maximum $\theta^\star$ of $|V_k(\theta)|$ contributing to $f$ is determined by \eqref{eq:crit}. We introduce $u$ as defined in Corollary~\ref{cor:critical_exponents} and substitute it into \eqref{eq:crit} to obtain the critical point equation
\[
  \frac{\kappa - u}{\kappa + u} = \left( \frac{1 - u}{1 + u} \right)^{k-1}.
\]
For $\kappa > \kappa_c$, this equation admits a unique nontrivial solution $u > 0$. Once $u$ is determined, the spontaneous magnetization $m_0$ is calculated directly as
\[
  m_0(\beta) = \frac{(1+u)^k - (1-u)^k}{(1+u)^k + (1-u)^k}.
\]

\begin{proof}[Proof of Corollary~\ref{cor:phase_transitions}]
  The statement follows from Corollary~\ref{cor:smoothf} and
  the explicit computation above confirming that $f(\beta,h)$ is continuous but not differentiable when $h=0$ and $\beta > \beta_c^\mathrm{tree}$.
  Furthermore, we see that we can extend $f$ to a function that is once but not twice differentiable at $(\beta,0) = (\beta_c^{\rm tree},0)$.
\end{proof}

In Figure~\ref{fig:m0_vs_h} we depict the magnetization as a function of the external field $h$ for fixed $k=4,~J=1$ and at fixed inverse temperature $\beta$ below (left-hand side) and above (right-hand side) critical inverse temperature $\beta_c^{\rm tree}$. If $\beta > \beta_c^{\rm tree}$ we observe a discontinuity of $m$ at $h=0$, leading to a nonzero spontaneous magnetization. A plot of the spontaneous magnetization as a function of $\beta$ is shown in Figure~\ref{fig:m0_vs_temp}.

\begin{figure}
  \centering
  \includegraphics[width=0.6\linewidth]{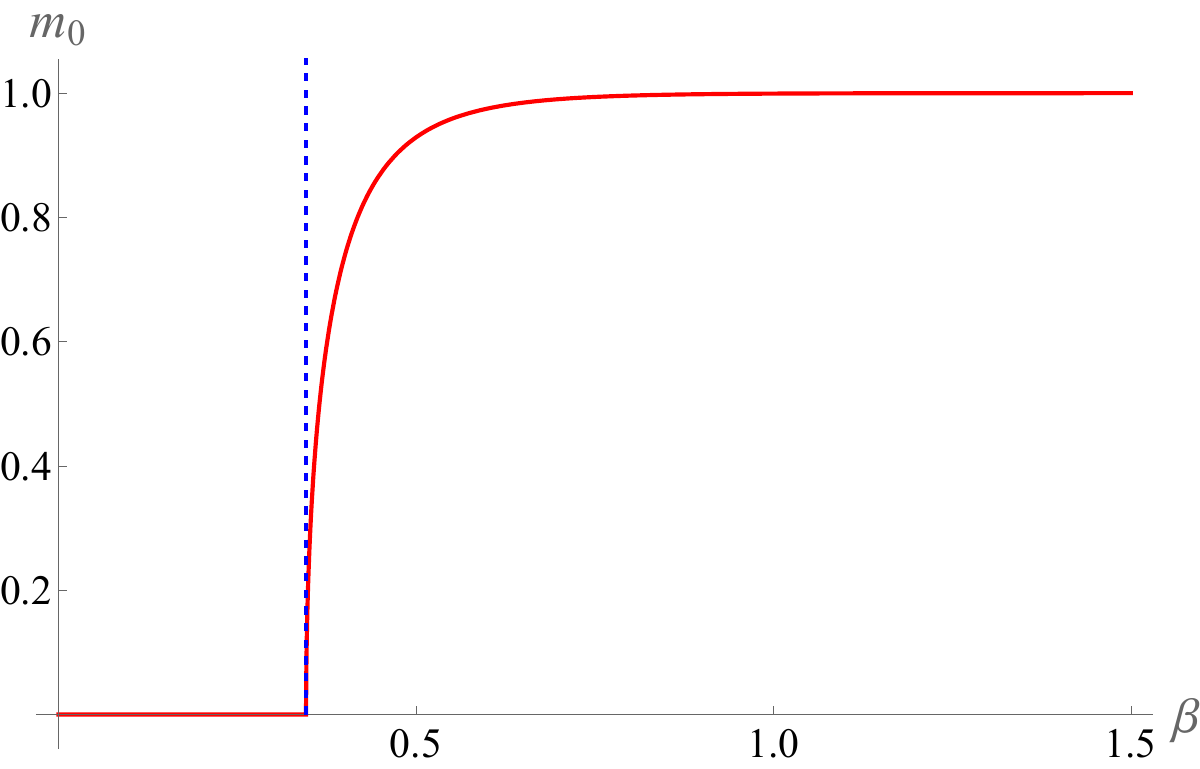}
  \caption{The curve represents the  spontaneous magnetization $m_0$ as a function of the inverse temperature $\beta$ with the parameters $k = 4$ and $J=1$. The vertical dashed line marks the second-order phase transition at $\beta_c^{\rm tree} = \frac12\log2$.}
  \label{fig:m0_vs_temp}
\end{figure}

\subsection{Critical behavior}
\label{subsec:crit_behavior}
In this section we prove Corollary~\ref{cor:critical_exponents} by computing leading-order approximations near criticality for the physical quantities defined in Section~\ref{sec:intro}. Near the phase transition the global maximum of Lemma~\ref{lem:maximum} approaches $\theta = 0$ as $(\tau, \kappa - \kappa_c) \to (0,0)$. Therefore, we compute a Taylor expansion of $V_k$ around the point $(\theta,\tau,\kappa) = (0,0,\kappa_c)$.
It is convenient to decompose $V_k(\theta, \tau, \kappa)$ into its symmetric (zero-field) and asymmetric ($\tau$-dependent) components,
\[
  V_k(\theta, \tau,\kappa) = \frac{(1+\kappa)^{k/2}}{2\cdot k!} \left( \left( \cos^k(\theta - \phi) + \cos^k(\theta + \phi) \right) + \tau \left( \cos^k(\theta - \phi) - \cos^k(\theta + \phi) \right) \right).
\]
Let $t = (T-T_c)/T_c$ denote the normalized temperature and define $\Delta(\kappa) = \kappa - \kappa_c$. Expanding $\kappa = \tanh\big(\tfrac{J}{k_B T}\big)$ around $T_c$ gives
\begin{equation}
  \label{eq:Delta_expand}
  \Delta(\kappa) = \kappa-\kappa_c = \left. t\cdot T_c \cdot\frac{\partial \kappa}{\partial T} \right|_{T=T_c} + \mathcal{O}(t^2) = \mathrm{arctanh}\left(\frac{1}{k-1}\right) \frac{k(k-2)}{(k-1)^2} (-t) + \mathcal{O}(t^2).
\end{equation}

Using the $\theta$ expansion of the asymmetric component 
\[
  \cos^k(\theta - \phi) - \cos^k(\theta + \phi) = 2k \cos^k(\phi)\tan(\phi) \, \theta + \mathcal{O}(\theta^3)
\]
and a similar expansion for the symmetric component up to quartic order, we obtain the Taylor expansion of $V_k$ near $(\theta,\tau,\Delta(\kappa)) = (0,0,0)$
\begin{equation}
  \label{eq:V_Expansion_kappa}
  V_k(\theta, \tau, \kappa) = \frac1{k!} \left( 1 + C_{2,0,1} \Delta(\kappa)\theta^2 + C_{4,0,0}\theta^4 + C_{1,1,0} \theta\tau + R(\theta,\tau,\kappa) \right),
\end{equation}
with coefficients 
\[
  C_{2,0,1} = \frac{k(k-1)}{2},\quad C_{4,0,0} = -\frac{k^2(k-2)}{12(k-1)},\quad C_{1,1,0} = k \tan(\phi_c) = \frac{k}{\sqrt{k-1}}\,,
\]
where $\phi_c = \arctan(\sqrt{\kappa_c})$ 
and error term $R(\theta,\tau,\kappa)$ satisfying
\[
  R(\theta,\tau,\kappa) = \mathcal{O}\left(
    \theta^6 + \theta^4 |\Delta(\kappa)| + \theta^2\Delta(\kappa)^2 + |\theta|^3|\tau| + 
    |\theta \tau \Delta(\kappa)|
  \right).
\]
Using \eqref{eq:Delta_expand}, we can rewrite \eqref{eq:V_Expansion_kappa} in terms of the normalized temperature as 
\begin{equation}
  \label{eq:V_Expansion}
  V_k(\theta, \tau, t) = \frac1{k!} \left( 1 + C'_{2,0,1} t \theta^2 + C_{4,0,0}\theta^4 + C_{1,1,0} \theta\tau + R'(\theta,\tau,t) \right),
\end{equation}
with
\[
  C'_{2,0,1} = -\mathrm{arctanh}\left( \frac{1}{k-1} \right) \frac{k^2 (k-2)}{2(k-1)},~ \text{and}~  R'(\theta,\tau,t) = \mathcal{O}\left(
    \theta^6 +  \theta^4 |t| + \theta^2 t^2 + | \theta|^3|\tau| + |\theta\tau t|
  \right).
\]

\subsubsection{The magnetization near critical temperature and the exponent \texorpdfstring{$\beta^{\rm mag}$}{βᵐᵃᵍ}}

The critical angle $\theta^\star$ satisfies the condition $\frac{\partial V_k}{\partial \theta}(\theta^\star,\tau,\kappa) = 0$. Using \eqref{eq:V_Expansion} and truncating yields
\[
  2 C'_{2,0,1} t \theta^\star + 4 C_{4,0,0}\cdot (\theta^\star)^3 + \mathcal{O}((\theta^\star)^5, |t|(\theta^\star)^3, t^2 \theta^\star) = 0.
\]
For $T < T_c$ (where $\Delta (\kappa) > 0$), using \eqref{eq:Delta_expand}, we obtain for the nontrivial root
\begin{equation}
  \label{eq:theta0_kappa}
  (\theta^\star)^2 = -\frac{C'_{2,0,1}t}{2 C_{4,0,0}} + \mathcal{O}(t^2) = \frac{3(k-1)^2}{k(k-2)} \Delta (\kappa) + \mathcal{O}(\Delta(\kappa)^2) = \frac{3}{2} \log\left(\frac{k}{k-2}\right) (-t) + \mathcal{O}(t^2). 
\end{equation}

To compute the spontaneous magnetization $m_0$ for small $t$, we evaluate \eqref{eq:M_zero_field} at small $\theta^\star$. Applying the first-order approximations $\cos(\phi_c \mp \theta^\star) = \cos(\phi_c) \, (1 \pm \theta^\star \tan(\phi_c)) + \mathcal{O}((\theta^\star)^2)$ and the expansion $(1+x)^k = 1+kx + \mathcal{O}(x^2)$, the magnetization simplifies to
\begin{align*}
 \begin{aligned}
    m_0(\beta) &= \frac{\cos^k(\phi_c) \left( (1 + k \theta^\star \tan(\phi_c)) - (1 - k \theta^\star \tan(\phi_c)) \right) + \mathcal{O}((\theta^\star)^3)}{\cos^k(\phi_c) \left( (1 + k \theta^\star \tan(\phi_c)) + (1 - k \theta^\star \tan(\phi_c)) \right) + \mathcal{O}((\theta^\star)^2)} \\
    &= k \tan(\phi_c) \theta^\star + \mathcal{O}((\theta^\star)^3)  \\
    &= \frac{k}{\sqrt{k-1}} \theta^\star + \mathcal{O}((-t)^{3/2})\,,
\end{aligned}
\end{align*}
where we recall that $\beta$ is a function of the normalized temperature $t$.

Substituting \eqref{eq:theta0_kappa} into this expression, we arrive at the leading-order critical behavior of the spontaneous magnetization as $t \nearrow 0$
\[
  m_0(\beta) = \frac{k}{\sqrt{k-1}} \sqrt{\frac{3}{2} \log\left(\frac{k}{k-2}\right)} \, (-t)^{1/2} + \mathcal{O}((-t)^{3/2})\,.
\]
By definition~\eqref{eq:def_beta^{mag}}, we obtain the critical exponent $\beta^{\rm mag} = \frac{1}{2}$.

\subsubsection{The specific heat critical exponent \texorpdfstring{$\alpha$}{α}}
\label{par:alpha_exponent}

Again, let us consider $h=0$. For positive normalized temperature, $t>0$, the global maximum of $|V_k(\theta)|$ is at $\theta^\star = 0$. Therefore, the high-temperature free energy function is given by 
\[
  f_{\rm ht}(t) = \frac{k}{2\beta(t)} \log(\cosh(\beta(t) J)).
\]
For $t < 0$, the maximum is given by the nonzero \eqref{eq:theta0_kappa}. Substituting the leading-order approximation \eqref{eq:theta0_kappa} for $\theta^\star$ into the expansion \eqref{eq:V_Expansion}, we obtain for $t>0$
\[
  f(-t,0) - f_{\rm ht}(-t) = -\frac{(C'_{2,0,1} t)^2}{4\beta_c^{\rm tree} C_{4,0,0}} + \mathcal{O}(t^3)=\frac{3Jk^2(k-2)\mathrm{arctanh}(\frac1{k-1})}{4(k-1)}t^2 + \mathcal{O}(t^3).
\]
Therefore, the limit \eqref{eq:singular_free_energy} is well-defined,
\[
  \lim_{t \searrow 0} \frac{\log \left| f(-t,0) - f_{\rm ht}(-t) \right|}{\log t} = 2,
\] 
and we obtain the specific heat critical exponent $\alpha=0$. In particular, this shows that the phase transition at $(\beta_c^{\rm tree},0)$ is second-order, completing the proof of Corollary~\ref{cor:phase_transitions}.

\subsubsection{The critical exponent \texorpdfstring{$\delta$}{δ}}

At critical temperature $t = 0$, the term $C'_{2,0,1} t\theta^2$ in the expansion \eqref{eq:V_Expansion} vanishes. The critical angle $\theta^\star$ is therefore determined to leading order by
\[    
  4 C_{4,0,0} \cdot (\theta^\star)^3 + \tau C_{1,1,0} = \mathcal{O}((\theta^\star)^5, |\tau|(\theta^\star)^2).
\]
Substituting $\theta^\star$ into $m(\beta,h)$ yields the leading-order behavior
\[    
  m(\beta_c^{\rm tree},h) = \frac{k}{\sqrt{k-1}} \left( \frac{3\sqrt{k-1}}{k(k-2)} \right)^{1/3} \tau^{1/3} + \mathcal{O}(\tau).
\]
Since $\tau = \tanh(\beta h)$ scales linearly in $h$ to leading order, we find $\delta=3$.

\subsubsection{Zero-field magnetic susceptibility and the exponent \texorpdfstring{$\gamma$}{γ}}

Recall that the magnetic susceptibility $\chi(\beta,h)$ is defined as the $h$-derivative of the magnetization $m(\beta,h)$. Using the identity \eqref{eq:magnetization} and assuming $t>0$ and $h=0$ (so $m(\beta,h) = 0$), we obtain
\[
  \chi(\beta,0) = \beta \left.\left(\frac{\partial}{\partial\theta} \frac{\partial}{\partial\tau} \log V_k(\theta,\tau) \right) \right|_{\theta=\theta^\star(\beta,0), \tau=0} \cdot \left. \frac{\partial \theta^\star}{\partial \tau} \right|_{\tau=0} + \O(1),
\]
where the other derivatives are absorbed in the $\O(1)$ correction. In the $t\searrow 0$ limit, the first factor above becomes $\tfrac{\beta_c^{\rm tree} k}{\sqrt{k-1}}$. Applying partial derivatives $\partial_\theta \partial_\tau$ to \eqref{eq:V_Expansion} and evaluating at $\theta=\theta^\star$ and $\tau=0$, leads to 
\[    
  \left. \frac{\partial \theta^\star}{\partial \tau} \right|_{\tau=0} = - \frac{\partial^2 V_k}{\partial \tau \partial \theta} \left(\frac{\partial^2 V_k}{\partial \theta^2}\right)^{-1} \Bigg|_{\theta=\theta^\star, \tau=0} = -\frac{C_{1,1,0}}{2C'_{2,0,1}t + 12C_{4,0,0}\cdot(\theta^\star)^2} + \O(1).
\]
Finally, using \eqref{eq:Delta_expand}, we obtain that the susceptibility scales like
\[
  \chi(\beta, 0) \sim \frac{t^{-1}}{J(k-2)} \quad\text{as } t\searrow 0,
\]
and hence we obtain the critical exponent $\gamma=1$. This completes the proof of Corollary~\ref{cor:critical_exponents}.\qed

\section{Combinatorial approach to magnetization}
\label{sec:magnetization_combinatorial}

In this section we give a combinatorial interpretation of the magnetization which directly emerges from the edge-bicolored graph counting perspective.\par

For a fixed graph $G$, the magnetization can also be defined as the average magnetic moment per vertex in the model \cite[(1.7.12)]{baxter1982exactly}, i.e.\ 
\[
  m_G(\beta,h) = \frac{1}{|V_G|} \langle \sigma_1 + \dots + \sigma_{|V_G|} \rangle_G.
\]
Since we are considering random labeled graphs, by linearity of the expectation, we may compute the expected value $\langle \sigma_1 \rangle_G$ for a random vertex $v_1$. The expectation is taken over the Gibbs probability measure, i.e.\ for any function $p\,:\, \left\{ \pm 1 \right\}^{|V_G|}\rightarrow\RR$
\[
  \langle p \rangle_G = \frac{Z_G(p,\beta,h)}{Z_G(\beta, h)},\quad \text{where } Z_G(p,\beta,h) = \!\!\!\sum_{\sigma\in \left\{ \pm 1\right\}^{|V_G|}} p(\sigma) e^{-\beta H_G}.
\]

The graphical expansion of the partition function, Proposition~\ref{prop:combinatorial_formula_partition_fct}, generalizes to a graphical expansion of the expected values $\langle \sigma_A \rangle_G$ for any $A \subseteq V_G$, see \cite[\S 2.2]{duminil2016random}. Namely,
\begin{equation*}
  \langle \sigma_A \rangle_G = \frac{\sum_{\gamma \subseteq E_G} \kappa^{|\gamma|} \tau^{|\mathfrak{o}(\gamma) \Delta A|}}{\sum_{\gamma \subseteq E_G} \kappa^{|\gamma|} \tau^{|\mathfrak{o}(\gamma)|}},
\end{equation*}
where, as before, $\kappa = \tanh(\beta J)$, $\tau = \tanh(\beta h)$, $\Delta$ denotes the symmetric difference, and 
\[
  \mathfrak{o}(\gamma) = \{v \in V_G \,:\, v \text{ has odd degree in the subgraph induced by } \gamma\}.
\]
To pass to the random graph model, we define the weighted (annealed) magnetization
\begin{equation}
  \label{eq:weighted_magnetization_finite}
  \widetilde{m}_{\epsilon n,k}(\beta, h) := \frac{\frac{1}{\epsilon n} \sum_{G\in \GG_{\epsilon n,k}^{\bullet}} \frac{1}{|\Aut(G)|} Z_G(\sigma_1,\beta,h)}{\sum_{G\in \GG_{\epsilon n,k}} \frac{1}{|\Aut(G)|} Z_G(\beta,h)} \quad \text{with } \epsilon = 1 + (k ~\mathrm{mod}~ 2), 
\end{equation}
where $\GG_{\epsilon n,k}^{\bullet}$ denotes the set of $k$-regular graphs with a single marked vertex (referred to as $v_1$) and $\epsilon n$ many vertices. Here, the $\epsilon$ ensures that the denominator does not vanish, since an odd regular graph needs to have an even number of vertices. Automorphisms in the numerator are those preserving the marked vertex.
The denominator is the $k$-regular random graph partition function from before. To compute the numerator of \eqref{eq:weighted_magnetization_finite}, we need to enumerate edge-bicolored graphs where exactly one (marked) vertex $v_1$ contributes $\lambda_{v_1}^{(G,\gamma)}$ with flipped parity compared to Proposition~\ref{prop:combinatorial_formula_partition_fct}, viz.\
\begin{equation}
  \label{eq:vtx_weight_flipped}
  \lambda_{v_1}^{(G,\gamma)} = \begin{cases}
    \omega_{v_1}^{(G,\gamma)} \cdot \tau & \text{if } \deg_{\gamma}(v_1) \text{ is even,} \\
    \omega_{v_1}^{(G,\gamma)} & \text{if } \deg_{\gamma}(v_1) \text{ is odd.}
  \end{cases}
\end{equation} 

We explain how to derive the asymptotics of the numerator in the large-$n$ limit. 
Let $A_{n,k}$ be the weighted number of edge-bicolored $k$-regular graphs with $n$ vertices, where each vertex $v$ is weighted by $\lambda_{v}^{(G,\gamma)}$ as in \eqref{eq:comb_vtx_weights}, except for a single marked vertex $v_1$ whose weight is given by \eqref{eq:vtx_weight_flipped}.
The generating function for $A_{n,k}$ is given by
\begin{equation}
  \label{eq:gen_fun_one_vtx_flipped}
  \sum_{n\geq 0}A_{n,k}(\kappa,\tau)z^n =  \sum_{s,t \geq 0} (2s-1)!! (2t-1)!! [x^{2s} y^{2t}] \left( z w_k(x,y) \exp(z V_k(x,y)) \right),
\end{equation}
where we define
\[
  w_k(x,y) := \sum_{\ell=0}^k \kappa^{\ell/2} \tau^{(\ell+1)\,\mathrm{mod}\, 2} \frac{x^{k-\ell}y^\ell}{\ell!(k-\ell)!},
\]
and, as before, we have the potential
\[
  V_k(x,y) =  \sum_{\ell=0}^k \kappa^{\ell/2} \tau^{\ell~\mathrm{mod}\, 2} \frac{x^{k-\ell}y^\ell}{\ell!(k-\ell)!}.
\]
A proof of \eqref{eq:gen_fun_one_vtx_flipped} is analogous to the proof of \cite[Proposition 3.5]{BMW_bicolored}. Extracting the $n$th coefficient via a Cauchy integral, one obtains along similar lines of \cite[Lemma 3.1]{BMW_multicolor}
\[
  A_{n,k} = \frac{2^{kn/2} \Gamma\big(\frac{kn+2}{2}\big)}{2\pi (n-1)!} \int_{S^1} w_k(\cos(\theta),\sin(\theta)) V_k(\cos(\theta),\sin(\theta))^{n-1} \mathrm{d}{\theta}.
\]
This integral is amenable to stationary phase approximation, see \cite[\S 5]{deBruijnAsymptotic}, and we get
\[
  A_{n,k} \underset{n\to\infty}{\sim} \left\{\begin{array}{ll}
    \frac{n}{2\sqrt{2}\pi} k^{\frac{nk+ 1}{2}} \left( \frac{2}{k-2} \right)^{\ell-\frac{1}{2}} \Gamma(\ell) \sum_{(x,y) \in \Phi(\beta,h)} \frac{w_k(x,y) V_k(x,y)^{n-1}}{\sqrt{B_k(x,y)}} & \text{if } nk \text{ is even}, \\
    0 & \text{otherwise},
  \end{array}\right. 
\]
where $\ell = \frac{k-2}{2}n$ and $\Phi(\beta, h)$ as well as $B_k$ are exactly as in Proposition~\ref{prop:asymptotic_limit_Z_n,k}. Here, we assume that all critical points in $\Phi$ are nondegenerate. Since, by Theorem~\ref{thm:free_energy}, there is a unique (up to antipodal symmetry) global maximum of $|V_k(\theta)|$ that contributes to $\Phi(\beta,h)$ (possibly with multiplicity), we obtain in the large-$n$ limit
\begin{equation}
  \label{eq:mtilde_asymp}
  \widetilde{m}_k := \lim_{n\to\infty} \widetilde{m}_{\epsilon n,k}(\beta, h) = \frac{w_k(\theta^\star)}{V_k(\theta^\star)} \quad \text{with } \epsilon = 1 + (k ~\mathrm{mod}~ 2),
\end{equation}
where $\theta^\star$ is the global maximizer of $|V_k(\theta)|$ exactly as in Theorem~\ref{thm:free_energy}. Using the identity
\[
  w_k = \tau V_k + (1-\tau^2) \frac{\partial}{\partial \tau} V_k,
\]
and comparing \eqref{eq:mtilde_asymp} with \eqref{eq:magnetization}, we see that asymptotically, in the thermodynamic limit, up to leading order $\widetilde{m}_k$ agrees with the magnetization $m(\beta,h)$ from before.

\printbibliography

\noindent 
\small
\textsc{Michael Borinsky}\\
\textsc{Perimeter Institute, 31 Caroline St N, Waterloo, Ontario N2L 2Y5, Canada}\\
\url{mborinsky@perimeterinstitute.ca}\medskip\\
\noindent
\textsc{Shiyue Ren}\\
\textsc{Perimeter Institute, 31 Caroline St N, Waterloo, Ontario N2L 2Y5, Canada}\\
\textsc{Department of Physics and Astronomy, University of Waterloo, 200 University Avenue West, Waterloo, Ontario N2L 3G1, Canada}\\
\url{rshiyue@uwaterloo.ca}\medskip\\
\noindent
\textsc{Maximilian Wiesmann}\\
\textsc{Center for Systems Biology Dresden}\\
\textsc{Max Planck Institute of Molecular Cell Biology and Genetics}\\
\textsc{Max Planck Institute for the Physics of Complex Systems}\\
\textsc{Technische Universität Dresden}\\
\url{wiesmann@pks.mpg.de}

\end{document}